\newtheorem{theorem}{Theorem}
\newtheorem{lemma}[theorem]{Lemma}
    \renewcommand\@make@capt@title[2]{%
     \@ifx@empty\float@link{\@firstofone}{\expandafter\href\expandafter{\float@link}}%
      {\textbf{#1}}\@caption@fignum@sep#2\quad}%
\renewcommand{\fnum@figure}{\textbf{Figure~\thefigure}}
\newcommand\id{\leavevmode\hbox{\small1\kern-3.3pt\normalsize1}}
\newcommand{\ketbra}[2]{\left|#1\middle\rangle\middle\langle#2\right|}
\begin{document}


\title{Polarization correction towards satellite-based QKD without an active feedback}

\author{Sourav Chatterjee}
\affiliation{Raman Research Institute, C. V. Raman Avenue, Sadashivanagar, Bengaluru, Karnataka 560080, India}
\author{Kaumudibikash Goswami}
\affiliation{Raman Research Institute, C. V. Raman Avenue, Sadashivanagar, Bengaluru, Karnataka 560080, India}
\author{Rishab Chatterjee}
\affiliation{Raman Research Institute, C. V. Raman Avenue, Sadashivanagar, Bengaluru, Karnataka 560080, India}
\author{Urbasi Sinha}
\email[]{usinha@rri.res.in}
\affiliation{Raman Research Institute, C. V. Raman Avenue, Sadashivanagar, Bengaluru, Karnataka 560080, India}

\begin{abstract}
    Quantum key distribution (QKD) is a cryptographic protocol to enable two parties to share a secure key string, which can be used in one-time pad cryptosystem. There has been an ongoing surge of interest in implementing long-haul photonic-implementation of QKD protocols. However, the endeavour is challenging in many aspects. In particular, one of the major challenges is the polarization degree of freedom of single-photons getting affected while transmission through optical fibres, or atmospheric turbulence. Conventionally, an active feedback-based mechanism is employed to achieve real-time polarization tracking. In this work, we propose an alternative approach where we first perform a state tomography to reconstruct the output density matrix. We then evaluate the optimal measurement bases at Bob's end that leads to the maximum (anti-)correlation in the measurement outcomes of both parties. As a proof-of-principle demonstration, we implement an in-lab BBM92 protocol --- a particular variant of a QKD protocol using quantum entanglement as a resource --- to exemplify the performance of our technique. We experimentally generate polarization-entangled photon pairs having $94\%$ fidelity with $\ket{\psi}_1 = 1/\sqrt{2}\,(\ket{HV}+\ket{VH})$ state and a concurrence of $0.92$. By considering a representative 1 ns coincidence window span in our implementation involving a novel alternative to an active feedback-based mechanism, we are able to achieve a quantum-bit-error-rate (QBER) of $\approx 5\%$, and a key rate of $\approx 35$ Kbps. The performance of our implemented protocol is independent of the local polarization rotations through optical fibres. We have also developed an algorithmic approach to optimize the trade-off between the key rate and QBER. Our approach obviates the need for active polarization tracking. Our method is also applicable to entanglement-based QKD demonstrations using partially mixed as well as non-maximally entangled states, and extends to single-photon implementations over fibre channels.
\end{abstract}

\maketitle

\section{Introduction}

State-of-the-art classical (public key) cryptosystems, based upon Rivest-Shamir-Adleman (RSA) algorithm\,\cite{rivest1983cryptographic}, offers security dependent upon computational assumptions, which can be easily broken once large scale quantum computers become available\,\cite{xu2020secure}. The solution to this threat is offered by a relatively new cryptographic primitive: quantum key distribution (QKD). It offers information-theoretically secure communication, i.e., free from any algorithmic or computational advancements\,\cite{chatterjee2020qkd}. The first QKD protocol was experimentally demonstrated over a 30 cm long free-space optical channel\,\cite{bennett1984ieee, bennett1992experimental}. Over the years, several sophisticated methods for performing QKD have been proposed\,\cite{ekert91, B92, bbm92, PhysRevLett.113.140501} and successfully implemented within laboratory environment\,\cite{jennewein2000quantum, ling2008e91, erven2009entanglement, diqkd_exp, chatterjee2020qkd}. Beyond the shielded lab atmosphere, on one hand several experiments have been performed to test the practical limits of wide-scale deployment of QKD between the two communicating parties, commonly known as Alice (sender) and Bob (receiver), using optical fibres\,\cite{scarani2009security, xu2020secure}. However, it has been reported that the attenuation loss and background noise suffered in fibre-based QKD transmissions prohibit achieving sufficiently large key rates beyond metropolitan-scale networks\,\cite{bedington2017progress, toyoshima_polarization-basis_2011}. On the other hand, satellite-based QKD serves as a promising technique in overcoming this transmission distance scaling issue. Hence, over the last decade, many free-space experiments have been performed to test QKD implementations with a moving platform including hot-air balloon\,\cite{wang2013direct}, truck\,\cite{bourgoin2015free}, aircraft\,\cite{pugh2017airborne, nauerth2013air}, and drone\,\cite{liu2020drone}. Furthermore, the progress in China's Quantum Experiments at Space Scale project has enabled world-wide efforts towards realizing full QKD demonstrations in free-space using orbiting satellites\,\cite{bedington2017progress, joshi2018space}.

The polarization of light is a commonly used degree of freedom to achieve the above-mentioned practical implementations of the QKD protocols. However, maintaining the polarization of light over long distance QKD protocols has practical challenges. For optical fibre based QKD protocols, the polarization state is affected due to randomly varying birefringence of the optical fibre\cite{vanwiggeren_transmission_1999, gordon_pmd_2000}. In case of free-space, although the polarization is comparatively robust against atmospheric turbulence~\cite{zhang_polarization_2017,korotkova_changes_2005,zhu_compensation-free_2021,yang_influence_2018}, the reference frame of the satellite plays a detrimental role---polarization changes according to the movement of the satellite. Hence, it is important to circumvent such polarization changes in both free-space and fibre-based QKD. Conventional mitigation techniques involve active polarization tracking devices. For instance, in Ref.~\cite{lee_robotized_2022} the authors used a robotized polarization correction based on an active control system. In Ref.~\cite{toyoshima_polarization-basis_2011}, Toyoshima et al. established a 1 km free space QKD link with an active control system based polarization tracking jitter error of 0.092\degree. In the above protocols, the authors calibrated the polarization change during the QKD session. In an alternative approach, in Ref.~\cite{ding_polarization-basis_2017}, the authors performed the polarization basis tracking using the sifted keys revealed during the QKD error-correction procedure. In case of fibre-based protocols, the polarization compensation was done in Ref.~\cite{Xavier_2009}, where the single photons were wavelength-multiplexed with two classical beams. The classical beams reveal the information regarding polarization fluctuation. Based on the polarization fluctuation, the authors used an active polarization control system to compensate for the polarization change. Fast feedback-based polarization controlling over an aerial fibre has been demonstrated in Ref.~\cite{li_field_2018}. These conventional approaches to mitigate the polarization fluctuation require active control systems. In this work, we propose an alternative solution where instead of correcting polarization fluctuation of the encoded state, we optimize the measurement bases at the receiver-end. 

Our method overcomes a number of challenges associated with active feedback systems. Firstly, while the aim of active feedback systems is to stabilize an unstable input state, often such control systems have more elements than the raw system, which incurs into additional costs. Furthermore, active control systems having more parts than the raw system, they are more prone to faults, which can lead to instability of the closed loop. In addition, such control systems often employ trial and error methods to nullify the output deviations. This approach often leads to the oscillatory response of the closed loop. We overcome the above challenges by the following procedure. We first perform a quantum state tomography at the output. Then, from the tomographically reconstructed density matrix, we evaluate the receiver's optimal choice of measurement bases such that the measurement outcomes lead to high (anti-)correlation that is required for successful key generation. Note that this approach obviates the need for maintaining the polarization state of the photons using resource-intensive control systems. The time required to evaluate the optimal measurement bases is solely dependent on the tomography of the output state, which could be significantly improved with efficient tomography techniques that offer faster convergence rates based on Bayesian learning ~\cite{Granade_2016,Evans_2022}, or machine learning-based approaches~\cite{SGT_theory,Rambach_2021,palmieri_experimental_2020}. Using an in-lab single-photon based BBM92 protocol implementation~\cite{bbm92, erven2009entanglement}, we demonstrate that our approach mitigates any performance limitations of the protocol otherwise posed upon by polarization fluctuations of the entangled photons. It is important to note, however, that our method can be generalized towards any QKD protocols. 

The other significant aspect of our work involves novel optimization techniques for single-photon based BBM92 protocol implementation \bibnotemark[optimization]. In case of the BBM92 protocol, once the communicating parties generate the time-stamps after measuring the entangled photons in the desired and undesired bases, our optimization techniques find the optimal coincidence windows required for determining the maximal (desired) and minimal noise (undesired) coincidences. To assess the performance of the BBM92 protocol, we use standard measures: quantum-bit-error-rate (QBER), and key rate. Key rate is the number of key-bits generated per second. QBER is the ratio of error rate to key rate. Using our optimization technique, we are able to achieve higher key rate while maintaining information-theoretically secure QBER ($<11\%$)\,\cite{renner2005, Preskil-11-percent}.

To demonstrate our novel polarization mitigation and optimization technique, we implement an in-lab BBM92 protocol. We first produce polarization entangled single-photon pairs using a Sagnac interferometer based type-II Spontaneous Parametric Downconversion (SPDC) source \cite{fedrizzi_wavelength-tunable_2007}. Through the optical fibres, we transmit the generated polarization-entangled single-photon pairs to two modules. The operations performed at the modules represent the operations by the communicating parties, henceforth we will refer to these modules as Alice and Bob. In a conventional BBM92 protocol, Alice and Bob agree on two out of three mutually unbiased measurement bases, $\sigma_1:\{\ket{D},\ket{A}\}$, $\sigma_2:\{\ket{R},\ket{L}\}$, and $\sigma_3:\{\ket{H},\ket{V}\}$. Here $\sigma_i$ is the Pauli operator, and the corresponding measurement bases are the eigenstates of the Pauli operator. On each single photon, the parties randomly measure the polarization in $\sigma_i$ or $\sigma_j$ bases to generate the time-stamps on which the cross-correlation is performed to detect the (anti-)correlations. However, the fibre birefringence affects the polarization states of both Alice and Bob. This jeopardizes the expected coincidence counts after the measurements. To mitigate this effect, we evaluate the optimal measurement bases at Bob's side to achieve higher coincidences. Even without optimization, irrespective of the output polarization state, our choice of optimal measurement bases allows us to achieve around $5\%$ QBER and $35$ Kbps key rate for $1$ ns coincidence window, and around $10\%$ QBER and $50$ Kbps key rate for $4$ ns coincidence window. Upon further optimization, we are also able to restrict the individual QBERs for all four measurement pairs below $11\%$ bound, while restricting the overall QBER to around $8.5\%$.
 
\section{Results}
\subsection{Theory for optimized measurement bases}\label{subsec:theory}
In this section, we describe our method to construct the optimal measurement bases to mitigate the polarization fluctuation during transmission of single photons over long distance. In practice, the polarization state of both the photons would be affected. However, the polarization fluctuation of two subsystems could be mitigated by addressing only one of the subsystems. We convey this in the following lemma where we show that two local unitary operations on each subsystem of an entangled state is equivalent to a single unitary operation in one of the subsystems.  
\begin{lemma}
The action of local unitary operations $U$ and $V$ on each subsystem of a Bell state $\ket{\psi}_i^{AB}$ is equivalent to a single local unitary operation $W{=}V\sigma_iU^T\sigma_i$ on the subsystem $B$, i.e. $(U^{A}{\otimes}V^{B})\ket{\psi}_i^{AB}{=}(\mathbb{1}^{A}{\otimes}W^{B})\ket{\psi}_i^{AB}$.  
\end{lemma}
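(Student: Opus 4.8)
The plan is to reduce everything to the \emph{ricochet} (transpose) identity for the canonical maximally entangled state $\ket{\Phi^+}{=}\tfrac{1}{\sqrt2}(\ket{HH}{+}\ket{VV})$, namely that for any single-qubit operator $M$ one has $(M^{A}{\otimes}\mathbb{1}^{B})\ket{\Phi^+}{=}(\mathbb{1}^{A}{\otimes}(M^{T})^{B})\ket{\Phi^+}$. First I would verify this identity directly from the component expansion $\ket{\Phi^+}{=}\tfrac{1}{\sqrt2}\sum_{k}\ket{kk}$: acting with $M$ on the first tensor factor produces $\tfrac{1}{\sqrt2}\sum_{k,l}M_{lk}\ket{l}\ket{k}$, and a relabelling of the summation indices shows this equals the action of $M^{T}$ on the second factor. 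This is the only nontrivial ingredient; everything that follows is bookkeeping.

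Next I would express the Bell state as a Pauli twist of $\ket{\Phi^+}$, writing $\ket{\psi}_i^{AB}{=}(\mathbb{1}^{A}{\otimes}\sigma_i^{B})\ket{\Phi^+}$ (one checks, for instance, that $(\mathbb{1}{\otimes}\sigma_1)\ket{\Phi^+}$ reproduces the state $\tfrac{1}{\sqrt2}(\ket{HV}{+}\ket{VH})$ quoted in the abstract, with $\sigma_0{=}\mathbb{1}$ covering the remaining case). Substituting this into the left-hand side gives
\begin{equation}
(U^{A}{\otimes}V^{B})\ket{\psi}_i^{AB}=(U^{A}{\otimes}(V\sigma_i)^{B})\ket{\Phi^+}.
\end{equation}
I would then factor the operator as $(\mathbb{1}^{A}{\otimes}(V\sigma_i)^{B})(U^{A}{\otimes}\mathbb{1}^{B})$ and apply the ricochet identity to the trailing factor, moving $U$ across to subsystem $B$ as its transpose:
\begin{equation}
(U^{A}{\otimes}(V\sigma_i)^{B})\ket{\Phi^+}=(\mathbb{1}^{A}{\otimes}(V\sigma_i U^{T})^{B})\ket{\Phi^+}.
\end{equation}

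Finally I would undo the Pauli twist using the involutivity $\sigma_i^2{=}\mathbb{1}$, which lets me replace $\ket{\Phi^+}$ by $(\mathbb{1}^{A}{\otimes}\sigma_i^{B})\ket{\psi}_i^{AB}$ and absorb the extra factor into the $B$ operator, yielding the single local unitary $W{=}V\sigma_i U^{T}\sigma_i$ exactly as claimed. I do not expect a genuine obstacle, since each step is forced; the only point demanding care is fixing one consistent convention for the transpose and for the Pauli labelling of the Bell states, because the transpose in the ricochet step is taken in the same basis in which the $\sigma_i$ are represented, and a mismatched convention would spuriously alter the final form of $W$. I would also note in passing that $W$ is manifestly unitary, being a product of unitaries, so the claimed equivalence is between two genuine local unitary channels on $B$.
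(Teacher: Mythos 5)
Your proposal is correct and follows essentially the same route as the paper's own proof: both reduce to the canonical Bell state via $\ket{\psi}_i{=}(\mathbb{1}{\otimes}\sigma_i)\ket{\psi}_0$, apply the transpose (ricochet) identity to move $U$ across to subsystem $B$, and undo the Pauli twist using $\sigma_i^2{=}\mathbb{1}$ to obtain $W{=}V\sigma_iU^T\sigma_i$. The only difference is cosmetic --- you verify the ricochet identity from the component expansion, while the paper cites it as well known --- and your closing remarks on transpose-basis conventions and the unitarity of $W$ are sensible additions.
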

\begin{proof}

Firstly, let us consider the Bell-state $\ket{\psi}_0{=}1/\sqrt{2}(\ket{00}{+}\ket{11})$.
It is well-known~\cite{Coecke_2010} that any unitary operation $U$ acting on one sub-system of $\ket{\psi}_0$ is equivalent to the transpose of the same unitary $U^T$ acting on the other sub-system:

\begin{align}
    (U{\otimes}\mathbb{1})\ket{\psi}_0{=}(\mathbb{1}{\otimes}U^T)\ket{\psi}_0. 
\end{align}
Hence, for two local unitary operations $U$ and $V$ on each subsystem of $\ket{\psi}_0$:
\begin{align}
    (U{\otimes}V)\ket{\psi}_0{=}(\mathbb{1}{\otimes}V).(U{\otimes}\mathbb{1})\ket{\psi}_0{=}(\mathbb{1}{\otimes}VU^T)\ket{\psi}_0. \label{Eq:phi_plus}
\end{align}
Now, let us consider other Bell-states $\{\ket{\psi}_i\}$ which are related to $\ket{\psi}_0$ by local Pauli operations $\{\sigma_i\}$:

\begin{align}
    \ket{\psi}_i{=}(\mathbb{1}{\otimes}\sigma_i)\ket{\psi}_0.\label{Eq:relation1}
\end{align}
As Pauli matrices are self-inverse, we also have

\begin{align}
    \ket{\psi}_0{=}(\mathbb{1}{\otimes}\sigma_i)\ket{\psi}_i.\label{Eq:relation2}
\end{align} 
For the sake of consistency, we assume $\sigma_0$ to be the identity operation. Now for two local unitary operations $U$ and $V$ acting on $\ket{\psi}_i$, we can write:
\begin{align}
    (U{\otimes}V)\ket{\psi}_i&{=}(U{\otimes}V\sigma_i)\ket{\psi}_0{=}(\mathbb{1}{\otimes}V\sigma_iU^T)\ket{\psi}_0 \nonumber \\
    &{=}(\mathbb{1}{\otimes}V\sigma_iU^T\sigma_i)\ket{\psi}_i{=}(\mathbb{1}{\otimes}W)\ket{\psi}_i.
\end{align}
Here, $W{=}V\sigma_iU^T\sigma_i$. The first, second and the third equalities are due to Eqs.~\eqref{Eq:relation1}, ~\eqref{Eq:phi_plus}, and ~\eqref{Eq:relation2}, respectively. This concludes our Lemma.
\end{proof}
The above lemma suggests that if we could infer the relevant unitary operation $W$, we could absorb $W$ in our measurement, i.e., if the desired measurement basis were $\{\ket{\alpha},\ket{\alpha^{\perp}}\}$, to mitigate the effect of $W$, we would have to measure in  $\{W^{\dagger}\ket{\alpha},W^{\dagger}\ket{\alpha^{\perp}}\}$ basis. To estimate the unitary, or equivalently the pure state  after the action of the unitary $W$, we first perform quantum state tomography at the output, let us assume the tomographically reconstructed density matrix is $\rho$. Next, we evaluate the nearest pure state of $\rho$, this nearest pure state will basically be the bell state affected by the unitary $W$. Next, based on the nearest pure state, we find the optimal measurement basis at Bob's end: Bob's $\{\ket{\phi_H},\ket{\phi_H^\perp}\}$ measurement basis showing the maximum (anti-)correlation with Alice's $\{\ket{H},\ket{V}\}$ measurement basis, and  Bob's $\{\ket{\phi_D},\ket{\phi_D^\perp}\}$ basis showing the maximum correlation with Alice's $\{\ket{D},\ket{A}\}$ measurement. In the next subsection, we show the method to find the nearest pure state.
\subsubsection{Nearest pure state from the eigendecomposition} \label{sec:nearest_pure}
The nearest pure state of a density matrix is, as the name suggests, the state that has the maximum overlap to the said density matrix. In our case, we are using fidelity to define the overlap. Formally, the pure state $\ket{\psi_\rho}$ is the `nearest' to the density matrix $\rho$ when their fidelity $F$ satisfies 

\begin{align}
    F =  \bra{\psi_\rho} \rho \ket{\psi_\rho} \geq \bra{\alpha}{\rho}\ket{\alpha}, \ \ \text{ for any pure state $\ket{\alpha}$}.
\end{align}

To find the nearest pure state, we perform eigendecomposition of the density matrix $\rho$. The nearest pure state would be the eigenvector corresponding to the maximum eigenvalue. Formally, if the eigendecomposition of $\rho{=}\sum_i \lambda_i\ketbra{\lambda_i}{\lambda_i}$, with $\{\lambda_i\}$ being the eigenvalues and $\{\ket{\lambda_i}\}$ being the corresponding eigenvector. We also assume that $\{\lambda _i\}$ are arranged in descending order: $\lambda_i{\geq}\lambda_j$ for $i>j$. Hence, according to our notation, $\lambda_1$ is the maximum eigenvalue and $\ket{\lambda_1}$ is the nearest pure state. In the following lemma, we will prove $\ket{\lambda_1}$ indeed has the maximum overlap with $\rho$.

\begin{lemma}
For a density matrix $\rho$ having eigendecomposition $\rho{=}\sum_i \lambda_i\ketbra{\lambda_i}{\lambda_i}$, with $\lambda_i{\geq}\lambda_j$ for $i>j$, the nearest pure state of the density matrix would be $\ket{\lambda_1}$.
\end{lemma}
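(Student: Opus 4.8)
The plan is to reduce the maximization of the overlap $F=\bra{\alpha}\rho\ket{\alpha}$ over all pure states $\ket{\alpha}$ to an elementary bound on a convex combination of the eigenvalues. The crucial structural fact is that $\rho$, being a density matrix, is Hermitian and positive semidefinite, so by the spectral theorem its eigenvectors $\{\ket{\lambda_i}\}$ form a complete orthonormal basis of the Hilbert space. This is what lets us analyse an \emph{arbitrary} pure state rather than only those aligned with the eigenbasis.

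First I would expand a generic normalized pure state in the eigenbasis as $\ket{\alpha}=\sum_i c_i\ket{\lambda_i}$, with $\sum_i |c_i|^2=1$. Substituting this into the fidelity and using orthonormality $\braket{\lambda_i|\lambda_j}=\delta_{ij}$ together with $\rho=\sum_i\lambda_i\ketbra{\lambda_i}{\lambda_i}$ collapses the double sum to a single one:
\begin{align}
    \bra{\alpha}\rho\ket{\alpha}=\sum_i \lambda_i\,|c_i|^2.
\end{align}
This exhibits $F$ as a weighted average of the eigenvalues, with nonnegative weights $|c_i|^2$ summing to unity.

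The final step is to bound this convex combination. Since the eigenvalues are ordered so that $\lambda_1\ge\lambda_i$ for all $i$, I would estimate $\sum_i \lambda_i\,|c_i|^2\le \lambda_1\sum_i |c_i|^2=\lambda_1$, with equality attained precisely when the entire weight concentrates on the leading eigenvector, i.e. $|c_1|=1$, giving $\ket{\alpha}=\ket{\lambda_1}$ up to an irrelevant global phase. Hence $\bra{\lambda_1}\rho\ket{\lambda_1}=\lambda_1\ge\bra{\alpha}\rho\ket{\alpha}$ for every pure state $\ket{\alpha}$, which is exactly the claimed maximal-overlap property.

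There is no serious obstacle here; the argument is a one-line application of the eigenbasis expansion. The only subtlety I would flag is the degenerate case: if the top eigenvalue $\lambda_1$ has multiplicity greater than one, the maximizer is not unique, since \emph{any} unit vector in the top eigenspace achieves $F=\lambda_1$, so $\ket{\lambda_1}$ should be read as ``some'' eigenvector belonging to the largest eigenvalue. For completeness one can also note that positivity and unit trace of $\rho$ force $0\le\lambda_i\le 1$, so the attained value is a genuine probability, though this fact is not needed for the bound itself.
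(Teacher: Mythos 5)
Your proof is correct and follows essentially the same route as the paper: expand $\ket{\alpha}$ in the eigenbasis of $\rho$, reduce the fidelity to the convex combination $\sum_i \lambda_i |c_i|^2$, and bound it by the largest eigenvalue. Your treatment is in fact slightly more careful than the paper's, which states equality holds ``if and only if $a_1 = 1$'' without acknowledging the global-phase freedom or the possible degeneracy of the top eigenvalue that you correctly flag.
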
\label{Lemma:nearest_pure}

\begin{proof}
Let us consider an arbitrary pure state $\ket{\alpha}$ with the spectral decomposition $\ket{\alpha}{=}\sum_i a_i \ket{\lambda _i}$, with $\sum_i|a_i|^2{=}1$. The fidelity between the state $\ket{\alpha}$ and $\rho$ is:
\begin{align}
    \bra{\alpha}{\rho}\ket{\alpha}&{=}\sum_{i,j}a^*_ia_j\bra{\lambda_i}{\rho}\ket{\lambda_j}\nonumber \\
    &{=}\sum_{i,j,k}a^*_ia_j\lambda_k\langle{\lambda_i}|{\lambda_k}\rangle \langle{\lambda_k}|{\lambda_j}\rangle {=}\sum_i \lambda_i|a_i|^2.
\end{align}
As the set $\{\lambda_i\}$ are in descending order, and the set $\{|a_i|^2\}$ form a probability distribution:$\sum_i |a_i|^2{=}1$ and $0{\leq}|a_i|^2{\leq}1$, the quantity $\sum_i \lambda_i|a_i|^2$ is maximum if and only if $a_1{=}1$ and $a_{i\neq 1}=0$. In that case, $\ket{\alpha}{=}\ket{\lambda_1}$: the eigenvector corresponding to the maximum eigenvalue.
\end{proof}

\subsubsection{Optimal measurement bases for BBM92 protocol}

From our tomographically obtained density matrix $\rho^{AB}$, we find the nearest pure state $\ket{\psi}_\rho^{AB}$. We can express the nearest pure state in the form

\begin{align}
    \ket{\psi}_\rho^{AB}{=}\frac{1}{\sqrt{2}}(\ket{H}^A\ket{\phi_H}^B{+}\ket{V}^A\ket{\phi_V}^B). \label{Eq:nearest_pure_state}
\end{align}
In an ideal scenario of maximally entangled state, $\ket{\phi_H}$ and $\ket{\phi_V}$ are orthogonal to each other, i.e., $|\langle{\phi_H}|{\phi_V}\rangle|^2{=}0$. However, depending on the concurrence of our estimated nearest pure state, $\ket{\phi_V}$ will have a small contribution from $\ket{\phi_H}$. In our experiment, the concurrence of the estimated nearest pure state is $0.99$. This ensures that $\ket{\phi_H}$ and $\ket{\phi_V}$ are almost orthogonal, i.e. we have $|\langle{\phi_H}|{\phi_V}\rangle|^2{\approx}0$ . From Eq.~\eqref{Eq:nearest_pure_state}, we can see when Alice measures in $\{\ket{H},\ket{V}\}$, Bob gets maximum (anti-)correlation while measuring in $\{\ket{\phi_H},\ket{\phi_H^\perp}\}$ basis. 

Similarly, when Alice measures in a different basis, we can calculate the corresponding rotated mutually unbiased basis. For instance, when we express the nearest pure state in diagonal/anti-diagonal basis:
\begin{align}
    \ket{\psi}_\rho^{AB}{=}\frac{1}{\sqrt{2}}(\ket{D}^A\ket{\phi_D}^B{+}\ket{A}^A\ket{\phi_A}^B), \label{Eq:nearest_pure_diagonal}
\end{align}
we can see that when Alice is measuring in $\{\ket{D},\ket{A}\}$ basis, Bob has to measure in $\{\ket{\phi_D},\ket{\phi_D^\perp}\}$ basis to get the highest (anti-)correlation. Note that,
 $\ket{\phi_D}{=}1/\sqrt{2}(\ket{\phi_H}{+}\ket{\phi_V})$ and $\ket{\phi_A}{=}1/\sqrt{2}(\ket{\phi_H}{-}\ket{\phi_V})$. As discussed, the concurrence of our estimated nearest pure state being $0.99$ ensures $|\langle{\phi_A}|{\phi_D}\rangle|^2{\approx}0$. In the next section, we are going to discuss our experimental results.

\subsection{Experimental outcome} \label{sub_sec:exp_outcome}

We implement the BBM92 protocol using a polarization entangled bi-photon source (PEBS) based upon a doubly-pumped type-II SPDC process in a Sagnac loop\,\cite{fedrizzi_wavelength-tunable_2007}. An abridged schematic of our experimental setup has been provided in Fig.~\ref{fig:BBM92schematic_source}. Our source produces polarization entangled single photon pairs having $94\%$ fidelity with the Bell-state $\ket{\psi}_1{=}1/\sqrt{2}(\ket{HV}+\ket{VH})$, and a concurrence of $0.92$. We transmit the single photons to Alice and Bob module through two optical fibres. Each optical fibre is accompanied with a fibre-bench Polarization Controller Kit by Thorlabs (PC-FFB-780), see Fig.~\ref{fig:BBM92schematic_8dets}. For a detailed description of the experimental schematic, see Sec.~\ref{sub_sec:exp_schematic}. Our objective is to have a controlled polarization change introduced in the experiment to demonstrate the efficacy of our correction mechanism for different cases. The introduction of the polarization controller enables introduction of  polarization fluctuation in both single photons, to essentially manipulate/control the fidelity of the two-qubit state with the ideal $\ket{\psi_1}$ state at the output end. For each altered polarization state, we perform a quantum state tomography at the output. From the tomographically reconstructed density matrix, we estimate the nearest pure state using Lemma~\ref{Lemma:nearest_pure}. Next we evaluate the measurement basis at Bob's end giving maximum (anti-) correlation with Alice's $\sigma_3$ basis:$\{\ket{\phi_H},\ket{\phi_H^\perp}\}$ as in Eq.\eqref{Eq:nearest_pure_state} and with Alice's $\sigma_1$ basis: $\{\ket{\phi_D},\ket{\phi_D^\perp}\}$ as in Eq.\eqref{Eq:nearest_pure_diagonal}. To complete the protocol, one of the parties (say Alice) sends his/her time-stamp information to the other party (say Bob) via a publicly accessible classical channel. Bob then performs a cross correlation between his time-stamp and Alice's time-stamp to generate the coincidence peaks. He further runs the optimization algorithm \bibnote[optimization]{The details of the optimization algorithm will be presented elsewhere.} to optimize the window sizes for each coincidence peak to optimize the key rate and QBER. Based on the optimized window choices, Bob informs Alice, via the public classical channel, which of her time-stamps needs to be discarded. From the remaining time-stamps, the two parties can reconstruct their respective keys from the information about their measurement outcomes. Note that, these measurement outcomes are private to the individual parties.



In our result, we first show how optimizing measurement bases could result in low QBER irrespective of the polarization rotation through the single mode fibres. This is in stark contrast to the conventional approach, where we restrict our choice of measurement to Pauli bases. In such cases we would see higher QBER for lower fidelity. 

Next we use our optimization algorithm for further improvement of the performance of BBM92 protocol \bibnotemark[optimization]. To summarize, the goal of the optimization algorithm is twofold: reducing the overall QBER while maintaining a high key rate, and maintaining the QBER of the individual measurement bases below the information-theoretically secure bound, $11\%$.

We show advantages of finding the optimal measurement bases in Fig.~\ref{Fig:main_figure} where the orange circles represent QBERs for different fidelities with $\ket{\psi}_1$ for optimal measurement bases. We see how the optimal measurements can lower the QBERs below $11\%$ independent of the fidelity of the output state. In contrast, the blue circles represent the QBERs for different fidelities for conventional measurement bases ($\sigma_3$ and $\sigma_1$ bases). We note that the QBERs increase with lower fidelity. Moreover, the decrease of the QBERs is not monotonic, as in case of the fidelities in the range of $40-60\%$ and $70-80\%$. This is because, we are using the fixed Pauli bases of $\sigma_1$ and $\sigma_3$ for all the fidelity points. However, choosing a different Pauli bases turn out to be more optimal in certain scenarios, e.g., for a fidelity of $60\%$, choice of  $\sigma_2$ (instead of $\sigma_1$) and $\sigma_3$ bases are more optimal. The cyan crosses in Fig.~\ref{Fig:main_figure} convey the idea, where we use the tomographically reconstructed density matrix to estimate the QBERs for two optimal Pauli bases, and achieve a monotonically decreasing set of QBERs. In all such cases, the optimal choice of measurement bases outperforms conventional measurement bases. 

\begin{figure}[htbp] 
\begin{center}
\includegraphics[width=\columnwidth]{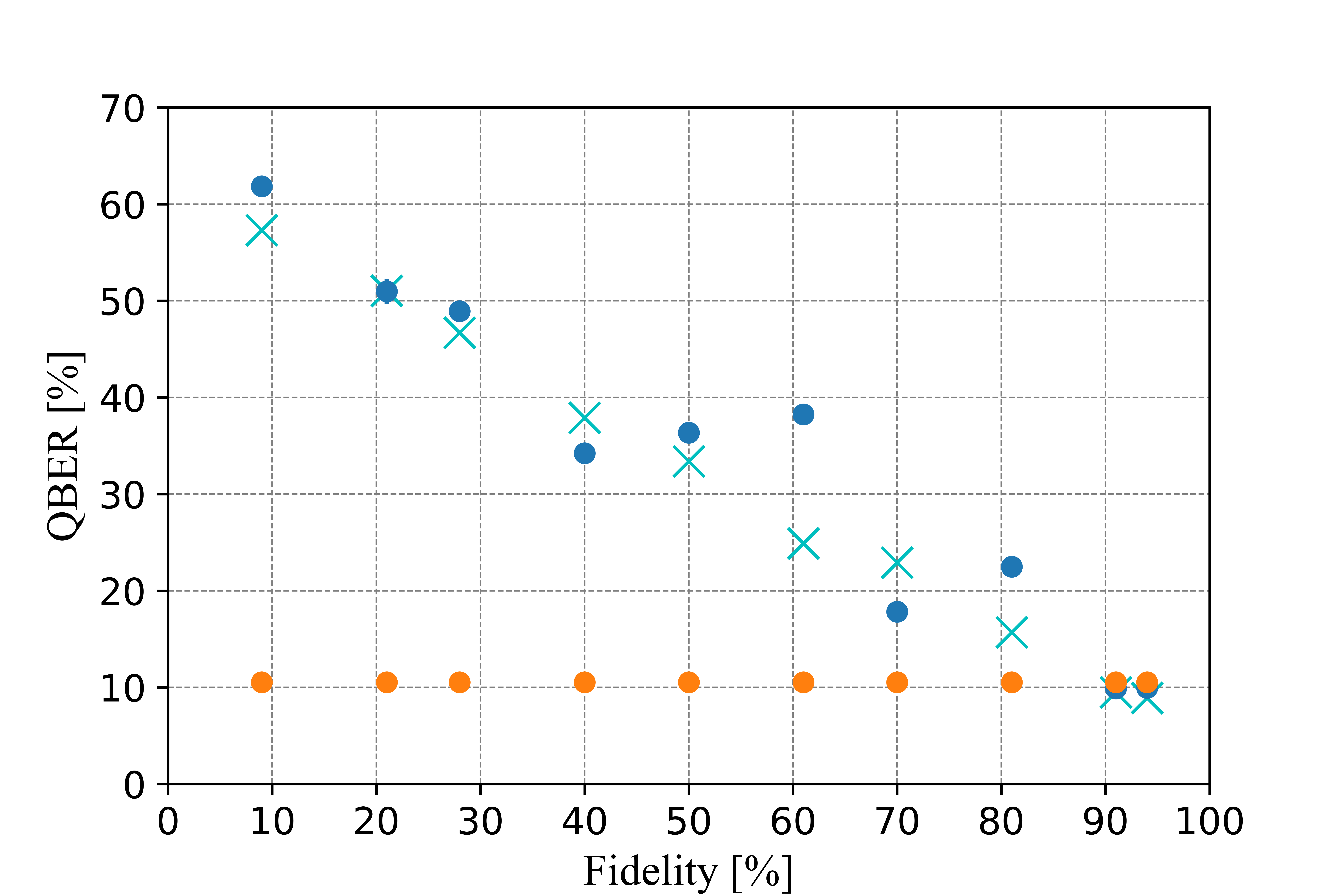}
\caption{Unoptimized QBERs (in $\%$) versus fidelity (in $\%$) with the singlet state $\ket{\psi}_1$. The blue circles represent our experimentally measured QBERs with the conventional bases of measurement (i.e., $\sigma_1$ and $\sigma_3$). It can be observed that the experimentally obtained QBERs are not monotonically decreasing, i.e., the QBERs portray an increasing trend for the fidelity of $40-60\%$ and $70-80\%$. However, a monotonic graph (i.e., the cyan crosses) can be obtained, when we choose the pair of Pauli measurement basis (i.e., any two out of $\sigma_1$, $\sigma_2$, and $\sigma_3$) that offers the best signal-to-noise ratio at each fidelity point. Nevertheless, it can be noted that the optimized measurement bases outperform the conventional measurement bases by offering a lower QBER (orange circles) irrespective of their fidelity with the singlet state. At each fidelity point, we measured 10 datasets, the mean of them represent the data points, while their standard deviation has been indicated with error bars.}
\vspace{-3mm}
\label{Fig:main_figure}
\end{center}
\end{figure}

In Fig.~\ref{Fig:main_figure_window}, we present the unoptimized results and show how varying coincidence window sizes lead to different QBERs and key rate. In both Figs.~\ref{Fig:qber_fid_unopt} and \ref{Fig:keyrate_fid_unopt}, the blue circles represent the data for 1 ns wide coincidence windows, the orange circles represent the data for 4 ns wide coincidence windows. We can see for 1 ns coincidence window, both QBERs ($\approx 5\%$) and key rate ($\approx 35$ Kbps) are lower compared to 4 ns, coincidence window where we have QBERs of $\approx 10\%$ and key rate of $\approx 50$ Kbps. To achieve optimal window sizes resulting in a better trade-off between QBER and key rate, i.e., low QBER and high key rate, we use the optimization algorithm. 

\begin{figure}[!ht]
	\centering
	\begin{subfigure}[b]{0.49\textwidth}
		\centering
		\includegraphics[width=\columnwidth]{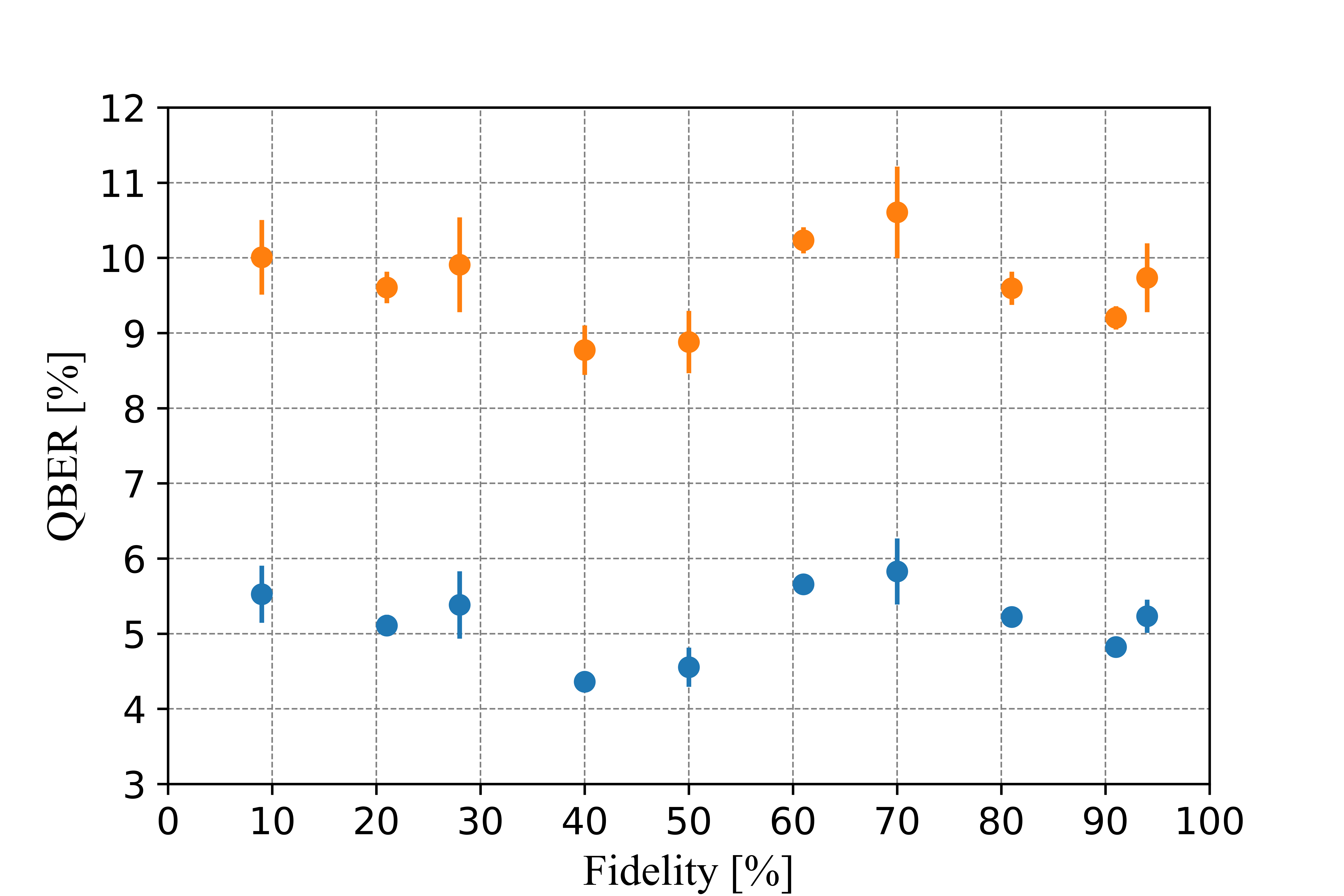}
		\caption{Unoptimized QBER (in \%) versus fidelity (in \%) with the $\ket{\psi}_1$ state plot for two different coincidence window spans.}
		\label{Fig:qber_fid_unopt} 
	\end{subfigure}
	\hfill
	\begin{subfigure}[b]{0.49\textwidth}
		\centering
		\includegraphics[width=\columnwidth]{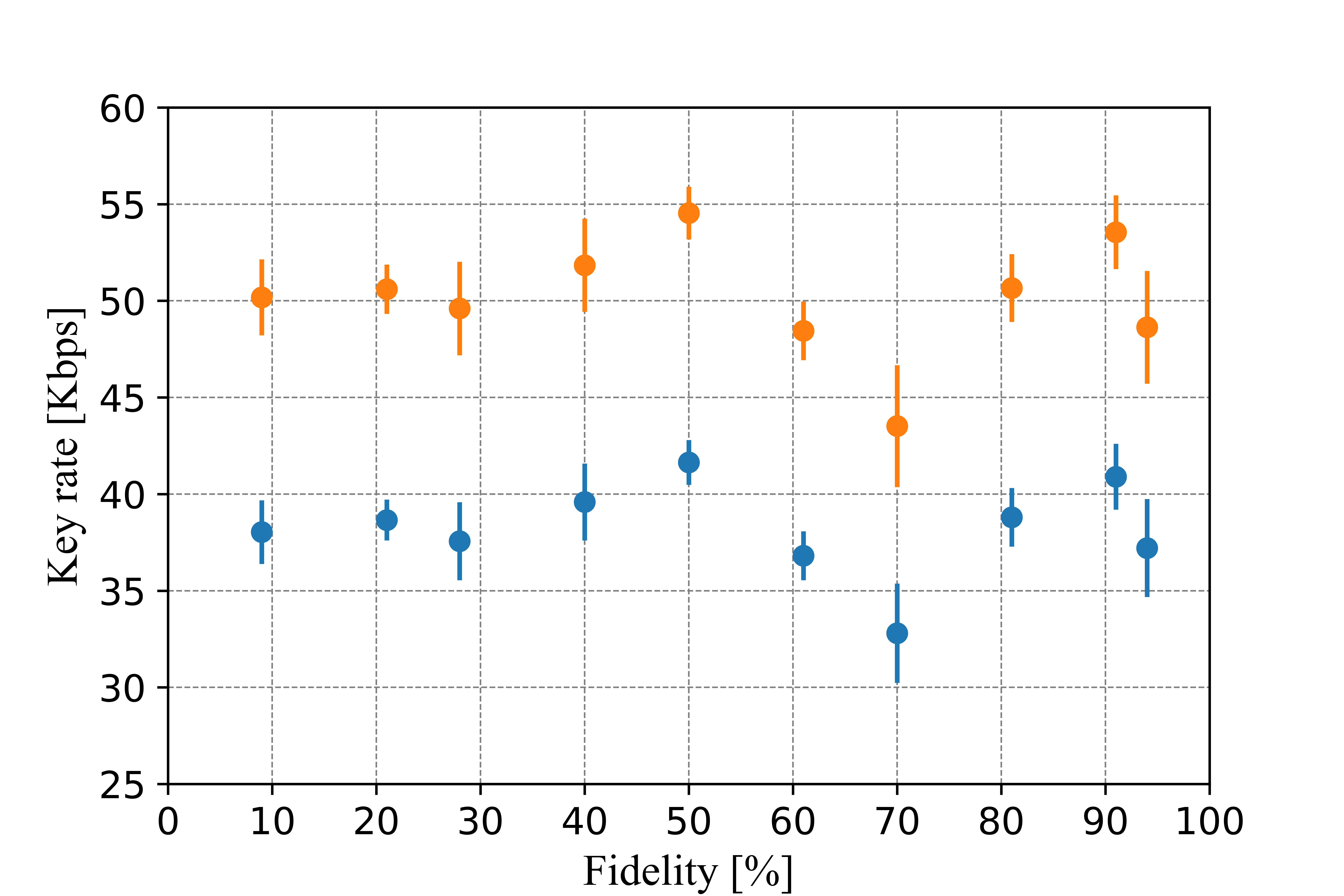}
		\caption{Unoptimized key rate (in Kbps) versus fidelity (in \%) with the $\ket{\psi}_1$ state plot for two different coincidence window spans.}
		\label{Fig:keyrate_fid_unopt}
	\end{subfigure}
	\caption{The blue and orange dots represent values corresponding to $1$ ns and $4$ ns coincidence window spans, respectively. For each fidelity point, the mean and standard deviation has been obtained over 10 measurement runs. The data points represent the mean of those runs, while the standard deviation in them has been indicated by the corresponding error bars.}
	\label{Fig:main_figure_window}
\end{figure}

In Fig.~\ref{Fig:overall_qber_fid_opt}, we show the advantages of the optimization algorithms. The orange circles represent optimized overall QBERs for the overall key string, however in such cases the QBERs for individual bases are not optimized. The blue circles represent the optimized overall QBERs where the individual QBERs are optimized as well. Using optimization, we could reduce the overall QBER while maintaining a high key rate of 40 Kbps. In Fig.~\ref{Fig:individual_qber_fid}, we show how the optimization algorithm takes QBERs for individual measurement bases into account. It is possible that while maintaining the overall QBER below $11\%$, the QBERs for individual measurement bases may shoot up above $11\%$ leading to leakage of information to the eavesdropper. To avoid this, it is important to contain the individual QBERs below $11\%$. In Fig.~\ref{Fig:individual_qber_fid}, the orange circles represent the maximum unoptimized QBERs for the individual measurement bases. The blue circles represent the optimized QBERs for the individual measurement bases. We note that the optimization algorithm ensures that the individual QBERs lie below $11\%$. 

\begin{figure}[!ht]
	\centering
	\begin{subfigure}[b]{0.49\textwidth}
		\centering
		\includegraphics[width=\columnwidth]{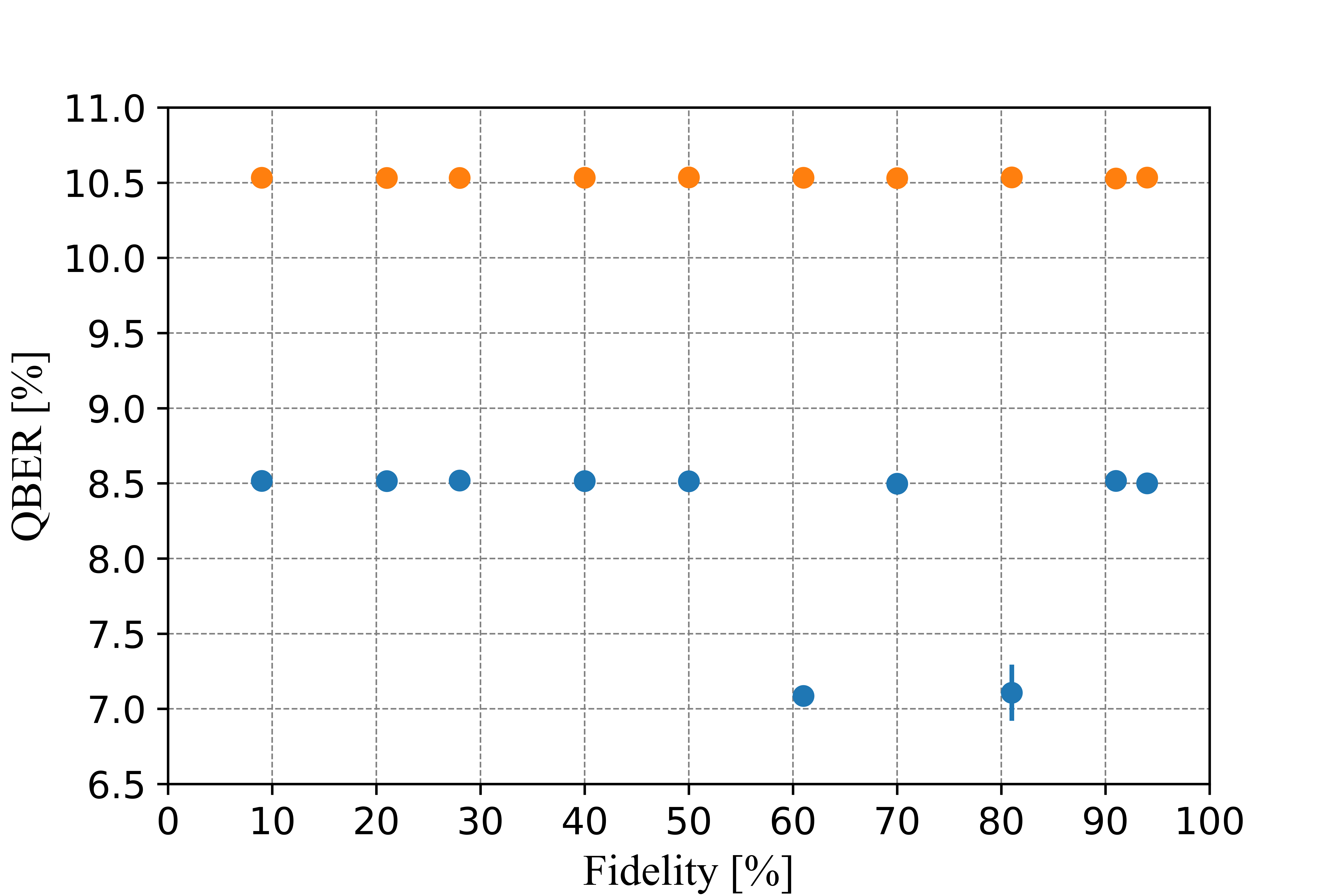}
		\caption{Optimized overall QBERs (in \%) considering coincidence measurements among both MUB versus fidelity (in \%) with the $\ket{\psi}_1$ state plot. The orange circles result from the optimization algorithm in which the QBERs for each individual MUB were not optimized, while the blue circles are obtained from another variant of the optimization algorithm in which they were optimized (restricted) to be below 11\%.}
		\label{Fig:overall_qber_fid_opt} 
	\end{subfigure}
	\hfill
	\begin{subfigure}[b]{0.49\textwidth}
		\centering
		\includegraphics[width=\columnwidth]{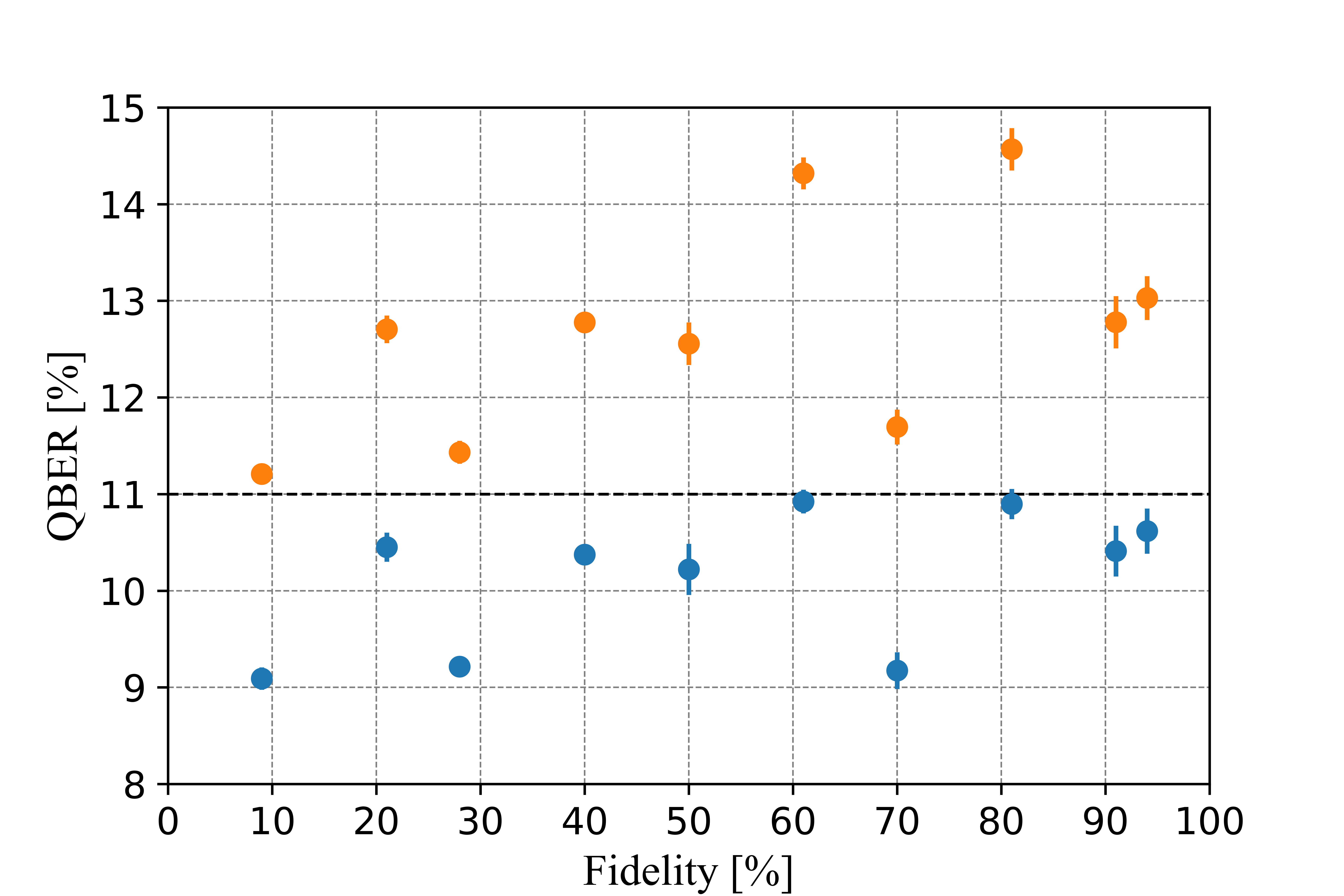}
		\caption{Maximal QBERs (in \%) considering coincidence measurements for a particular (i.e., individual) MUB versus fidelity (in \%) with the $\ket{\psi}_1$ state plot, corresponding to the two results depicted in Fig.~\ref{Fig:overall_qber_fid_opt}. Here again, the orange circles result from the optimization algorithm in which the QBERs for each individual MUB were not optimized, while the blue circles are obtained from another variant of the optimization algorithm in which they were optimized (restricted) to be below 11\% (as indicated with the black dashed line).}
		\label{Fig:individual_qber_fid}
	\end{subfigure}
	\caption{Plots highlighting the advantages of our optimization algorithms. It can be noted that by using our optimization algorithm, we could reduce the overall QBER while maintaining a high key rate of 40 Kbps. For each data points against a given fidelity, the mean and standard deviation has been obtained over 10 measurement runs similar to Fig.~\ref{Fig:main_figure_window} (although in these cases for optimized window spans). Again, the data points represent the mean of those runs, while the standard deviation in them has been indicated by the corresponding error bars.}
	\label{Fig:main_figure_2}
\end{figure}

\section{Discussion}
To conclude, we have addressed important practical challenges of long-distance QKD utilizing polarization of photons to encode the quantum state. The polarization state of the light is inevitably affected during long distance transmission. The conventional active feedback system based polarization tracking techniques are resource intensive, resulting in additional maintenance cost. We have shown that instead of active polarization correction, we can construct optimal measurement bases to achieve low QBER and high key rate irrespective of the polarization fluctuation. As a proof-of-principle demonstration, we have used the BBM92 protocol using polarization entangled photon pairs.  We overcome the polarization fluctuations of the single photons during transmission through optical fibres by performing optimal measurements. As our approach is based on the general principle of state preparation and measurement duality, the method can easily be extended to other QKD protocols. 
To construct the optimal choice of measurements, the parties perform a quantum state tomography on the received two qubit state before each QKD session. Based on the tomographically reconstructed density matrix, Bob arrives at his choice of measurement bases through the techniques introduced in Subsec.~\ref{subsec:theory}. Our approach overcomes the need for active feedback-based control systems. Another advantage of our protocol could be found in scenarios where the entangled photon source is itself not perfect. In principle, the generated entangled state could be either partially mixed or non-maximally entangled. Our technique provides a recipe to construct optimal measurement bases even in such non-ideal conditions. Our technique could be particularly advantageous in downlink-based QKD protcols, e.g., the quantum experiments using the Micius satellite~\cite{Micius_rev_mod} where the photon sources, being in the satellite, are not readily accessible to the experimentalists. In such cases, altering the more easily accessible measurement bases of only one party could overcome the detrimental effects of polarization fluctuation. So far we are constructing the optimal measurement bases using a tomographically complete dataset. As a future direction, it would be interesting to see if such optimization of measurement bases is possible without performing a full quantum state tomography. 

\section{Methods}
\subsection{Notations} \label{sec:notation} 
We introduce the relevant notations used in this paper. We denote the Pauli group as $\{\sigma_i|i{\in}\{0,1,2,3\}\}$ where $\sigma_0{=}\mathbb{1}$ is the identity operation and $\sigma_1$, $\sigma_2$ and $\sigma_3$ are Pauli $X$, $Y$ and $Z$ respectively. To denote the Bell states, note that all Bell states are related by local Pauli operations, hence we use an indexed notation of the Bell states as $\{\ket{\psi}_i|i{\in}\{0,1,2,3\}\}$, such that $\ket{\psi}_i{=}(\mathbb{1}{\otimes}\sigma_i)\ket{\psi}_0$. In this way, our indexed Bell states are 
\begin{align}
\ket{\psi}_0{=}\frac{1}{\sqrt{2}}(\ket{00}{+}\ket{11}), \  \
\ket{\psi}_1{=}\frac{1}{\sqrt{2}}(\ket{01}{+}\ket{10}),\nonumber\\
\ket{\psi}_2{=}\frac{1}{\sqrt{2}}(\ket{01}{-}\ket{10}), \ \ 
\ket{\psi}_3{=}\frac{1}{\sqrt{2}}(\ket{00}{-}\ket{11}).
\end{align}
As we use polarization degree of freedom of the photons to encode the quantum state --- horizontal polarization $\ket{H}{\rightarrow}\ket{0}$, and vertical polarization $\ket{V}\rightarrow\ket{1}$, --- we will interchangeably use standard notations for polarization  to denote our quantum states:
\begin{align}
\ket{H}&{\equiv}{\ket{0}},\,\, \ket{V}{\equiv}{\ket{1}}, \nonumber \\
\ket{D}&{\equiv}\frac{1}{\sqrt{2}}(\ket{0}{+}\ket{1}),\,\, \ket{A}{\equiv}\frac{1}{\sqrt{2}}(\ket{0}{-}\ket{1}), \nonumber \\
\ket{R}&{\equiv}\frac{1}{\sqrt{2}}(\ket{0}{+}i\ket{1}),\,\,
\ket{L}{\equiv}\frac{1}{\sqrt{2}}(\ket{0}{-}i\ket{1}). 
\end{align}

\subsection{Experimental schematic}\label{sub_sec:exp_schematic}

Our experimental setup for implementing the BBM92 protocol contains a PEBS, which generates polarization entangled photons pairs via spontaneous parametric down-conversion (SPDC) process, from a doubly-pumped type-II periodically-poled KTP (PPKTP) crystal in a Sagnac configuration as schematically represented in Fig.~\ref{fig:BBM92schematic_source}\,\cite{fedrizzi_wavelength-tunable_2007}. More specifically, we use a 405 nm pump beam through a type-II PPKTP crystal with 10 $\mu$m poling period as indicated in Fig.~\ref{fig:BBM92schematic_source}. Passing the pump beam through the crystal produces down-converted, degenerate single photon pairs with central wavelength of 810 nm. The single photons are then separated from the pump beam via two dichroic mirrors. In case of the perfect alignment of the setup, a horizontally polarized ($\ket{H}$) pump beam produces two down-converted single photons of polarization state $\ket{HV}$. A vertically polarized ($\ket{V}$) pump beam, on the other hand, produces single photons with polarization state $\ket{VH}$. By changing the pump polarization, we can get a polarization entangled state $\ket{\psi}_{\phi}{=}1/\sqrt{2}(\ket{HV}{+}e^{i\phi}\ket{VH})$. The relative phase $\phi$ can be (adjusted) set to zero by varying the pump polarization using the quater-wave plate (QWP) and half-wave plate (HWP) placed at the entry of the interferometer, i.e., in principle oriented at $0^{\circ}$ and $22.5^{\circ}$, respectively, in order to produce at singlet state as shown in Fig.~\ref{fig:BBM92schematic_source}. To characterize our state, we perform a quantum state tomography at the output of the source. Our source has a $91\%$ purity, $94\%$ fidelity with respect to $\ket{\psi}_1{=}1/\sqrt{2}(\ket{HV}{+}\ket{VH})$, and a concurrence of $0.92$. After developing the source, we dispatch the photons through two optical fibres to two setups (each approximately 5 meter apart from the source), referred as Alice and Bob modules as introduced in \ref{sub_sec:exp_outcome}. We show the schematic of the experimental setup for the BBM92 measurement scheme in Fig.~\ref{fig:BBM92schematic_8dets}.

\begin{center}
\begin{figure}[!h]
\centering
\includegraphics[width=\columnwidth]{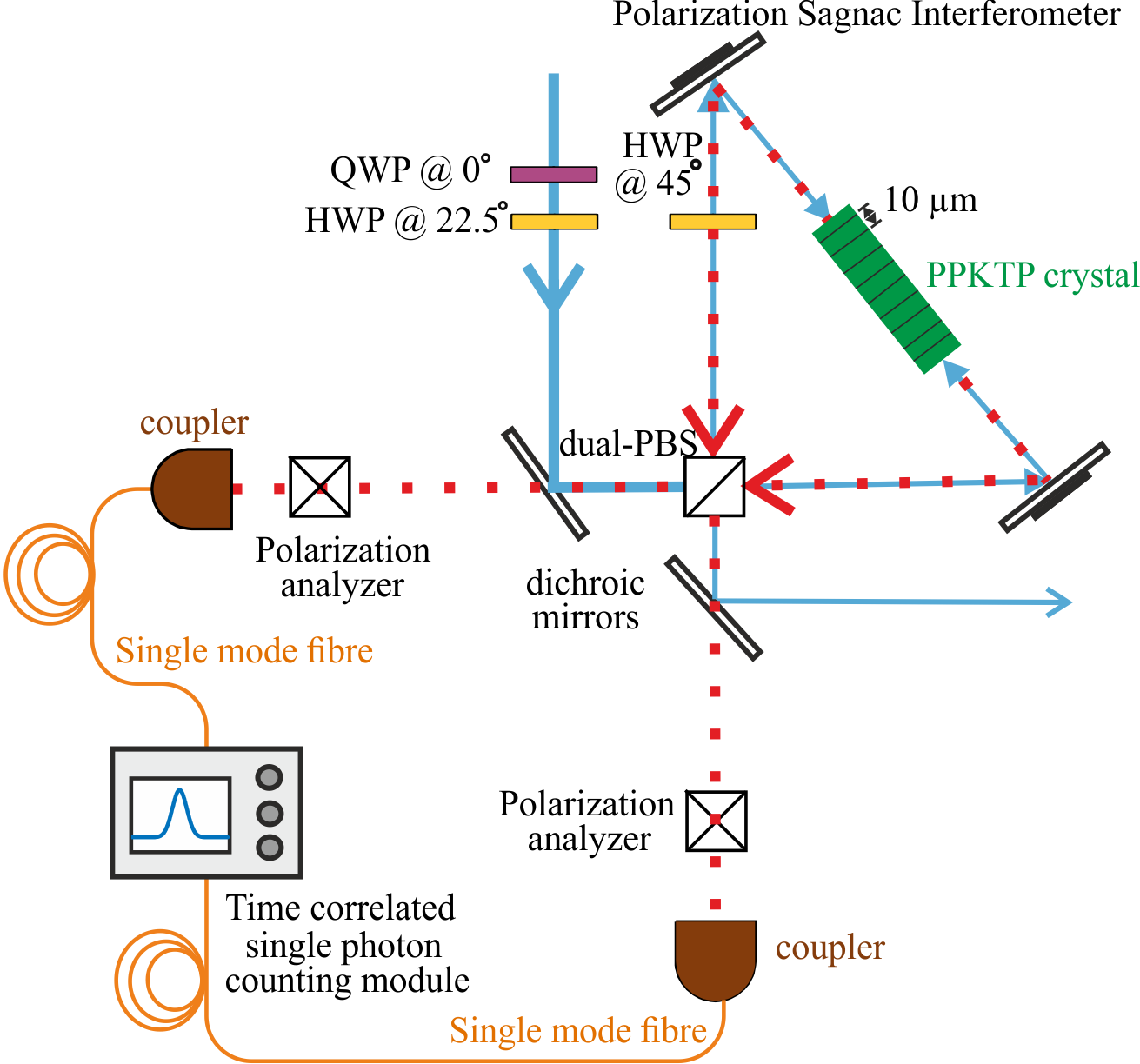}
\caption{Schematic of the Sagnac interferometer based polarization entangled bi-photon source (PEBS) setup.}
\label{fig:BBM92schematic_source}
\end{figure}
\par\end{center}

\begin{center}
\begin{figure}[!h]
\centering
\includegraphics[width=\columnwidth]{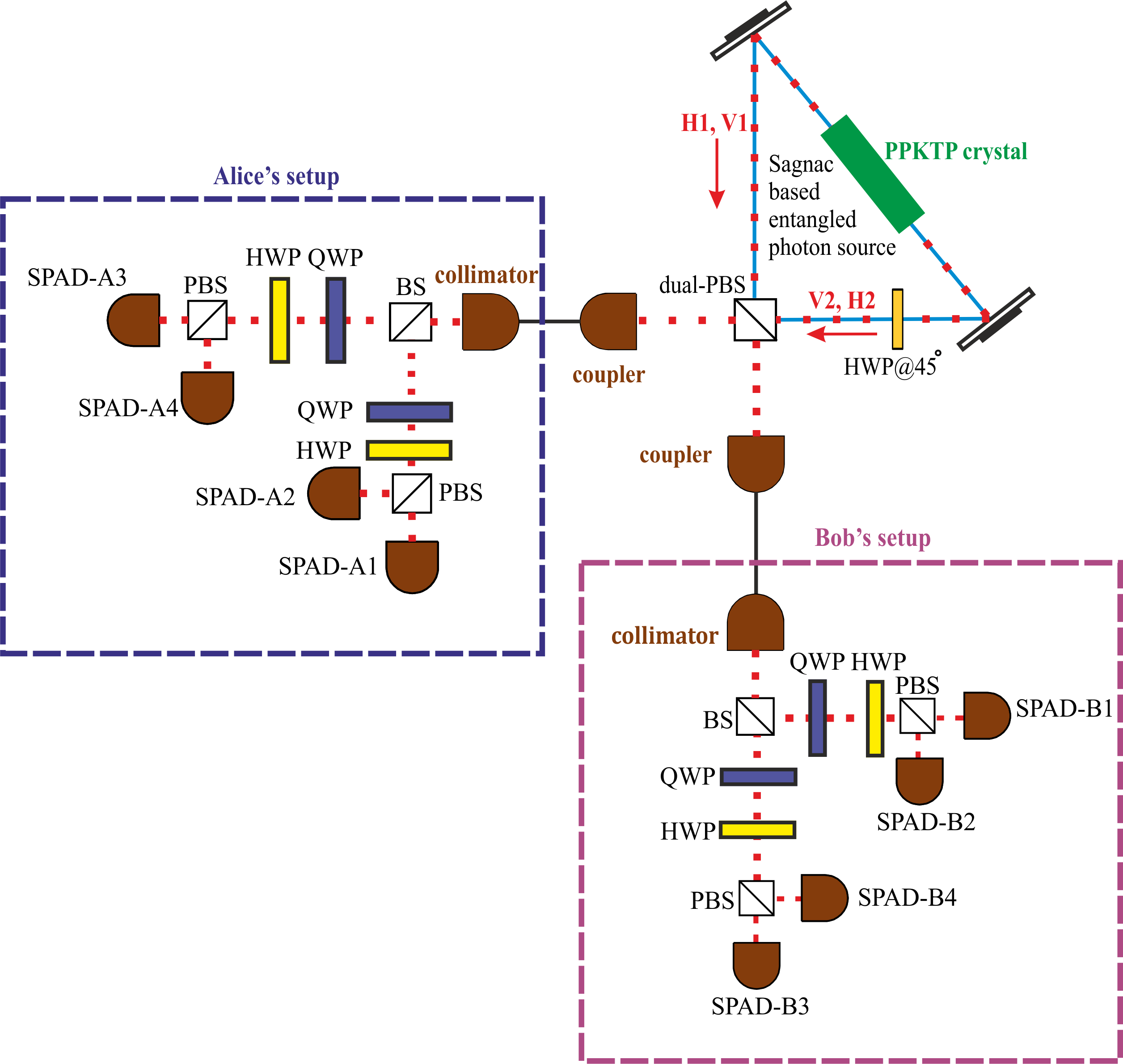}
\caption{Schematic of the experimental setup used for realizing the BBM92 protocol. In this setup, the polarization-entangled photon-pairs are transferred to the eight single-photon avalanche detectors (SPADs) to implement the relevant measurement bases for the protocol. The paired terms (H1, V1) and (V2, H2), indicated in red, represent the corresponding polarization of the daughter photons emerging, from each pump photon striking the crystal, in two different directions.}
\label{fig:BBM92schematic_8dets}
\end{figure}
\par\end{center}

During transmission, the polarization of the single photons at Alice and Bob modules are affected. To mitigate this, in Alice module, we randomly measure the stream of incoming source-photons along the rectilinear and diagonal projection bases. On the other hand, in Bob module, we randomly measure the polarization of incoming source-photons along the $\{\ket{\phi_H},\ket{\phi_H^{\perp}}\}$ and $\{\ket{\phi_D},\ket{\phi_D^\perp}\}$ projection bases. We implement the random choice of measurements using 50:50 beam-splitters.Each of these basis projections can lead to either of the two outcomes (detection of the photon along the transmitted arm (H/D), and ($\phi_H$/$\phi_D$) or other along the reflected arm (V/A), and ($\phi_H^{\perp}$/$\phi_D^{\perp}$) of the PBS). In this way, we end up with a total of eight coincidence detection between the Alice's and Bob's detector clicks. Four of those form the desirable set (signal) and the other four form the undesirable set (noise). We assess the number of coincidences in these sets by analysing the signal-to-noise ratios (SNRs), through the consideration of suitable window spans around the peak maxima. The signal and the noise values thus obtained within these window regions are then used to compute the raw key rate, and the quantum-bit-error-rate (QBER) of BBM92 protocol implementation.

\section*{ACKNOWLEDGEMENTS}

US would like to thank the Indian Space Research Organization for support through the QuEST-ISRO research grant. We thank Kaushik Joarder for initial assistance in photon source design.

\bibliographystyle{apsrev4-1}
\bibliography{manuscript.bib}

\begin{thebibliography}{43}%
\makeatletter
\providecommand \@ifxundefined [1]{%
 \@ifx{#1\undefined}
}%
\providecommand \@ifnum [1]{%
 \ifnum #1\expandafter \@firstoftwo
 \else \expandafter \@secondoftwo
 \fi
}%
\providecommand \@ifx [1]{%
 \ifx #1\expandafter \@firstoftwo
 \else \expandafter \@secondoftwo
 \fi
}%
\providecommand \natexlab [1]{#1}%
\providecommand \enquote  [1]{``#1''}%
\providecommand \bibnamefont  [1]{#1}%
\providecommand \bibfnamefont [1]{#1}%
\providecommand \citenamefont [1]{#1}%
\providecommand \href@noop [0]{\@secondoftwo}%
\providecommand \href [0]{\begingroup \@sanitize@url \@href}%
\providecommand \@href[1]{\@@startlink{#1}\@@href}%
\providecommand \@@href[1]{\endgroup#1\@@endlink}%
\providecommand \@sanitize@url [0]{\catcode `\\12\catcode `\$12\catcode
  `\&12\catcode `\#12\catcode `\^12\catcode `\_12\catcode `\%12\relax}%
\providecommand \@@startlink[1]{}%
\providecommand \@@endlink[0]{}%
\providecommand \url  [0]{\begingroup\@sanitize@url \@url }%
\providecommand \@url [1]{\endgroup\@href {#1}{\urlprefix }}%
\providecommand \urlprefix  [0]{URL }%
\providecommand \Eprint [0]{\href }%
\providecommand \doibase [0]{http://dx.doi.org/}%
\providecommand \selectlanguage [0]{\@gobble}%
\providecommand \bibinfo  [0]{\@secondoftwo}%
\providecommand \bibfield  [0]{\@secondoftwo}%
\providecommand \translation [1]{[#1]}%
\providecommand \BibitemOpen [0]{}%
\providecommand \bibitemStop [0]{}%
\providecommand \bibitemNoStop [0]{.\EOS\space}%
\providecommand \EOS [0]{\spacefactor3000\relax}%
\providecommand \BibitemShut  [1]{\csname bibitem#1\endcsname}%
\let\auto@bib@innerbib\@empty
\bibitem [{\citenamefont {Rivest}\ \emph {et~al.}(1983)\citenamefont {Rivest},
  \citenamefont {Shamir},\ and\ \citenamefont
  {Adleman}}]{rivest1983cryptographic}%
  \BibitemOpen
  \bibfield  {author} {\bibinfo {author} {\bibfnamefont {R.~L.}\ \bibnamefont
  {Rivest}}, \bibinfo {author} {\bibfnamefont {A.}~\bibnamefont {Shamir}}, \
  and\ \bibinfo {author} {\bibfnamefont {L.~M.}\ \bibnamefont {Adleman}},\
  }\href {https://patents.google.com/patent/US4405829A/en} {\enquote {\bibinfo
  {title} {Cryptographic communications system and method},}\ } (\bibinfo
  {year} {1983}),\ \bibinfo {note} {{US} {Patent} 4,405,829}\BibitemShut
  {NoStop}%
\bibitem [{\citenamefont {Xu}\ \emph {et~al.}(2020)\citenamefont {Xu},
  \citenamefont {Ma}, \citenamefont {Zhang}, \citenamefont {Lo},\ and\
  \citenamefont {Pan}}]{xu2020secure}%
  \BibitemOpen
  \bibfield  {author} {\bibinfo {author} {\bibfnamefont {F.}~\bibnamefont
  {Xu}}, \bibinfo {author} {\bibfnamefont {X.}~\bibnamefont {Ma}}, \bibinfo
  {author} {\bibfnamefont {Q.}~\bibnamefont {Zhang}}, \bibinfo {author}
  {\bibfnamefont {H.-K.}\ \bibnamefont {Lo}}, \ and\ \bibinfo {author}
  {\bibfnamefont {J.-W.}\ \bibnamefont {Pan}},\ }\href {\doibase
  https://doi.org/10.1103/RevModPhys.92.025002} {\bibfield  {journal} {\bibinfo
   {journal} {Rev. Mod. Phys.}\ }\textbf {\bibinfo {volume} {92}},\ \bibinfo
  {pages} {025002} (\bibinfo {year} {2020})}\BibitemShut {NoStop}%
\bibitem [{\citenamefont {Chatterjee}\ \emph {et~al.}(2020)\citenamefont
  {Chatterjee}, \citenamefont {Joarder}, \citenamefont {Chatterjee},
  \citenamefont {Sanders},\ and\ \citenamefont {Sinha}}]{chatterjee2020qkd}%
  \BibitemOpen
  \bibfield  {author} {\bibinfo {author} {\bibfnamefont {R.}~\bibnamefont
  {Chatterjee}}, \bibinfo {author} {\bibfnamefont {K.}~\bibnamefont {Joarder}},
  \bibinfo {author} {\bibfnamefont {S.}~\bibnamefont {Chatterjee}}, \bibinfo
  {author} {\bibfnamefont {B.~C.}\ \bibnamefont {Sanders}}, \ and\ \bibinfo
  {author} {\bibfnamefont {U.}~\bibnamefont {Sinha}},\ }\href {\doibase
  https://doi.org/10.1103/PhysRevApplied.14.024036} {\bibfield  {journal}
  {\bibinfo  {journal} {Phys. Rev. App.}\ }\textbf {\bibinfo {volume} {14}},\
  \bibinfo {pages} {024036} (\bibinfo {year} {2020})}\BibitemShut {NoStop}%
\bibitem [{\citenamefont {Bennett}\ and\ \citenamefont
  {Brassard}(2014)}]{bennett1984ieee}%
  \BibitemOpen
  \bibfield  {author} {\bibinfo {author} {\bibfnamefont {C.~H.}\ \bibnamefont
  {Bennett}}\ and\ \bibinfo {author} {\bibfnamefont {G.}~\bibnamefont
  {Brassard}},\ }\href {\doibase https://doi.org/10.1016/j.tcs.2014.05.025}
  {\bibfield  {journal} {\bibinfo  {journal} {Theoretical Computer Science}\
  }\textbf {\bibinfo {volume} {560}},\ \bibinfo {pages} {7 } (\bibinfo {year}
  {2014})},\ \bibinfo {note} {{Theoretical} Aspects of Quantum Cryptography –
  celebrating 30 years of BB84}\BibitemShut {NoStop}%
\bibitem [{\citenamefont {Bennett}\ \emph
  {et~al.}(1992{\natexlab{a}})\citenamefont {Bennett}, \citenamefont
  {Bessette}, \citenamefont {Brassard}, \citenamefont {Salvail},\ and\
  \citenamefont {Smolin}}]{bennett1992experimental}%
  \BibitemOpen
  \bibfield  {author} {\bibinfo {author} {\bibfnamefont {C.~H.}\ \bibnamefont
  {Bennett}}, \bibinfo {author} {\bibfnamefont {F.}~\bibnamefont {Bessette}},
  \bibinfo {author} {\bibfnamefont {G.}~\bibnamefont {Brassard}}, \bibinfo
  {author} {\bibfnamefont {L.}~\bibnamefont {Salvail}}, \ and\ \bibinfo
  {author} {\bibfnamefont {J.}~\bibnamefont {Smolin}},\ }\href {\doibase
  10.1007/BF00191318} {\bibfield  {journal} {\bibinfo  {journal} {Journal of
  cryptology}\ }\textbf {\bibinfo {volume} {5}},\ \bibinfo {pages} {3}
  (\bibinfo {year} {1992}{\natexlab{a}})}\BibitemShut {NoStop}%
\bibitem [{\citenamefont {Ekert}(1991)}]{ekert91}%
  \BibitemOpen
  \bibfield  {author} {\bibinfo {author} {\bibfnamefont {A.~K.}\ \bibnamefont
  {Ekert}},\ }\href {\doibase 10.1103/PhysRevLett.67.661} {\bibfield  {journal}
  {\bibinfo  {journal} {Phys. Rev. Lett.}\ }\textbf {\bibinfo {volume} {67}},\
  \bibinfo {pages} {661} (\bibinfo {year} {1991})}\BibitemShut {NoStop}%
\bibitem [{\citenamefont {Bennett}(1992)}]{B92}%
  \BibitemOpen
  \bibfield  {author} {\bibinfo {author} {\bibfnamefont {C.~H.}\ \bibnamefont
  {Bennett}},\ }\href {\doibase 10.1103/PhysRevLett.68.3121} {\bibfield
  {journal} {\bibinfo  {journal} {Phys. Rev. Lett.}\ }\textbf {\bibinfo
  {volume} {68}},\ \bibinfo {pages} {3121} (\bibinfo {year}
  {1992})}\BibitemShut {NoStop}%
\bibitem [{\citenamefont {Bennett}\ \emph
  {et~al.}(1992{\natexlab{b}})\citenamefont {Bennett}, \citenamefont
  {Brassard},\ and\ \citenamefont {Mermin}}]{bbm92}%
  \BibitemOpen
  \bibfield  {author} {\bibinfo {author} {\bibfnamefont {C.~H.}\ \bibnamefont
  {Bennett}}, \bibinfo {author} {\bibfnamefont {G.}~\bibnamefont {Brassard}}, \
  and\ \bibinfo {author} {\bibfnamefont {N.~D.}\ \bibnamefont {Mermin}},\
  }\href {\doibase 10.1103/PhysRevLett.68.557} {\bibfield  {journal} {\bibinfo
  {journal} {Phys. Rev. Lett.}\ }\textbf {\bibinfo {volume} {68}},\ \bibinfo
  {pages} {557} (\bibinfo {year} {1992}{\natexlab{b}})}\BibitemShut {NoStop}%
\bibitem [{\citenamefont {Vazirani}\ and\ \citenamefont
  {Vidick}(2014)}]{PhysRevLett.113.140501}%
  \BibitemOpen
  \bibfield  {author} {\bibinfo {author} {\bibfnamefont {U.}~\bibnamefont
  {Vazirani}}\ and\ \bibinfo {author} {\bibfnamefont {T.}~\bibnamefont
  {Vidick}},\ }\href {\doibase 10.1103/PhysRevLett.113.140501} {\bibfield
  {journal} {\bibinfo  {journal} {Phys. Rev. Lett.}\ }\textbf {\bibinfo
  {volume} {113}},\ \bibinfo {pages} {140501} (\bibinfo {year}
  {2014})}\BibitemShut {NoStop}%
\bibitem [{\citenamefont {Jennewein}\ \emph {et~al.}(2000)\citenamefont
  {Jennewein}, \citenamefont {Simon}, \citenamefont {Weihs}, \citenamefont
  {Weinfurter},\ and\ \citenamefont {Zeilinger}}]{jennewein2000quantum}%
  \BibitemOpen
  \bibfield  {author} {\bibinfo {author} {\bibfnamefont {T.}~\bibnamefont
  {Jennewein}}, \bibinfo {author} {\bibfnamefont {C.}~\bibnamefont {Simon}},
  \bibinfo {author} {\bibfnamefont {G.}~\bibnamefont {Weihs}}, \bibinfo
  {author} {\bibfnamefont {H.}~\bibnamefont {Weinfurter}}, \ and\ \bibinfo
  {author} {\bibfnamefont {A.}~\bibnamefont {Zeilinger}},\ }\href {\doibase
  https://doi.org/10.1103/PhysRevLett.84.4729} {\bibfield  {journal} {\bibinfo
  {journal} {Phys. Rev. Lett.}\ }\textbf {\bibinfo {volume} {84}},\ \bibinfo
  {pages} {4729} (\bibinfo {year} {2000})}\BibitemShut {NoStop}%
\bibitem [{\citenamefont {Ling}\ \emph {et~al.}(2008)\citenamefont {Ling},
  \citenamefont {Peloso}, \citenamefont {Marcikic}, \citenamefont
  {Lamas-Linares},\ and\ \citenamefont {Kurtsiefer}}]{ling2008e91}%
  \BibitemOpen
  \bibfield  {author} {\bibinfo {author} {\bibfnamefont {A.}~\bibnamefont
  {Ling}}, \bibinfo {author} {\bibfnamefont {M.}~\bibnamefont {Peloso}},
  \bibinfo {author} {\bibfnamefont {I.}~\bibnamefont {Marcikic}}, \bibinfo
  {author} {\bibfnamefont {A.}~\bibnamefont {Lamas-Linares}}, \ and\ \bibinfo
  {author} {\bibfnamefont {C.}~\bibnamefont {Kurtsiefer}},\ }\href {\doibase
  10.1117/12.778556} {\bibfield  {journal} {\bibinfo  {journal} {Proc SPIE}\ }
  (\bibinfo {year} {2008}),\ 10.1117/12.778556}\BibitemShut {NoStop}%
\bibitem [{\citenamefont {Erven}\ \emph {et~al.}(2009)\citenamefont {Erven},
  \citenamefont {Hamel}, \citenamefont {Resch}, \citenamefont {Laflamme},\ and\
  \citenamefont {Weihs}}]{erven2009entanglement}%
  \BibitemOpen
  \bibfield  {author} {\bibinfo {author} {\bibfnamefont {C.}~\bibnamefont
  {Erven}}, \bibinfo {author} {\bibfnamefont {D.}~\bibnamefont {Hamel}},
  \bibinfo {author} {\bibfnamefont {K.}~\bibnamefont {Resch}}, \bibinfo
  {author} {\bibfnamefont {R.}~\bibnamefont {Laflamme}}, \ and\ \bibinfo
  {author} {\bibfnamefont {G.}~\bibnamefont {Weihs}},\ }in\ \href {\doibase
  10.1007/978-3-642-11731-2_14} {\emph {\bibinfo {booktitle} {International
  Conference on Quantum Comunication and Quantum Networking}}}\ (\bibinfo
  {organization} {Springer},\ \bibinfo {year} {2009})\ pp.\ \bibinfo {pages}
  {108--116}\BibitemShut {NoStop}%
\bibitem [{\citenamefont {Arnon-Friedman}\ \emph {et~al.}(2018)\citenamefont
  {Arnon-Friedman}, \citenamefont {Dupuis}, \citenamefont {Fawzi},
  \citenamefont {Renner},\ and\ \citenamefont {Vidick}}]{diqkd_exp}%
  \BibitemOpen
  \bibfield  {author} {\bibinfo {author} {\bibfnamefont {R.}~\bibnamefont
  {Arnon-Friedman}}, \bibinfo {author} {\bibfnamefont {F.}~\bibnamefont
  {Dupuis}}, \bibinfo {author} {\bibfnamefont {O.}~\bibnamefont {Fawzi}},
  \bibinfo {author} {\bibfnamefont {R.}~\bibnamefont {Renner}}, \ and\ \bibinfo
  {author} {\bibfnamefont {T.}~\bibnamefont {Vidick}},\ }\href {\doibase
  10.1038/s41467-017-02307-4} {\bibfield  {journal} {\bibinfo  {journal}
  {Nature communications}\ }\textbf {\bibinfo {volume} {9}},\ \bibinfo {pages}
  {459} (\bibinfo {year} {2018})}\BibitemShut {NoStop}%
\bibitem [{\citenamefont {Scarani}\ \emph {et~al.}(2009)\citenamefont
  {Scarani}, \citenamefont {Bechmann-Pasquinucci}, \citenamefont {Cerf},
  \citenamefont {Du{\v{s}}ek}, \citenamefont {L{\"u}tkenhaus},\ and\
  \citenamefont {Peev}}]{scarani2009security}%
  \BibitemOpen
  \bibfield  {author} {\bibinfo {author} {\bibfnamefont {V.}~\bibnamefont
  {Scarani}}, \bibinfo {author} {\bibfnamefont {H.}~\bibnamefont
  {Bechmann-Pasquinucci}}, \bibinfo {author} {\bibfnamefont {N.~J.}\
  \bibnamefont {Cerf}}, \bibinfo {author} {\bibfnamefont {M.}~\bibnamefont
  {Du{\v{s}}ek}}, \bibinfo {author} {\bibfnamefont {N.}~\bibnamefont
  {L{\"u}tkenhaus}}, \ and\ \bibinfo {author} {\bibfnamefont {M.}~\bibnamefont
  {Peev}},\ }\href {\doibase 10.1103/RevModPhys.81.1301} {\bibfield  {journal}
  {\bibinfo  {journal} {Rev. Mod. Phys.}\ }\textbf {\bibinfo {volume} {81}},\
  \bibinfo {pages} {1301} (\bibinfo {year} {2009})}\BibitemShut {NoStop}%
\bibitem [{\citenamefont {Bedington}\ \emph {et~al.}(2017)\citenamefont
  {Bedington}, \citenamefont {Arrazola},\ and\ \citenamefont
  {Ling}}]{bedington2017progress}%
  \BibitemOpen
  \bibfield  {author} {\bibinfo {author} {\bibfnamefont {R.}~\bibnamefont
  {Bedington}}, \bibinfo {author} {\bibfnamefont {J.~M.}\ \bibnamefont
  {Arrazola}}, \ and\ \bibinfo {author} {\bibfnamefont {A.}~\bibnamefont
  {Ling}},\ }\href {\doibase 10.1038/s41534-017-0031-5} {\bibfield  {journal}
  {\bibinfo  {journal} {npj Quantum Information}\ }\textbf {\bibinfo {volume}
  {3}},\ \bibinfo {pages} {1} (\bibinfo {year} {2017})}\BibitemShut {NoStop}%
\bibitem [{\citenamefont {Toyoshima}\ \emph {et~al.}(2011)\citenamefont
  {Toyoshima}, \citenamefont {Takenaka}, \citenamefont {Shoji}, \citenamefont
  {Takayama}, \citenamefont {Takeoka}, \citenamefont {Fujiwara},\ and\
  \citenamefont {Sasaki}}]{toyoshima_polarization-basis_2011}%
  \BibitemOpen
  \bibfield  {author} {\bibinfo {author} {\bibfnamefont {M.}~\bibnamefont
  {Toyoshima}}, \bibinfo {author} {\bibfnamefont {H.}~\bibnamefont {Takenaka}},
  \bibinfo {author} {\bibfnamefont {Y.}~\bibnamefont {Shoji}}, \bibinfo
  {author} {\bibfnamefont {Y.}~\bibnamefont {Takayama}}, \bibinfo {author}
  {\bibfnamefont {M.}~\bibnamefont {Takeoka}}, \bibinfo {author} {\bibfnamefont
  {M.}~\bibnamefont {Fujiwara}}, \ and\ \bibinfo {author} {\bibfnamefont
  {M.}~\bibnamefont {Sasaki}},\ }\href {\doibase 10.1155/2011/254154}
  {\bibfield  {journal} {\bibinfo  {journal} {International Journal of Optics}\
  }\textbf {\bibinfo {volume} {2011}},\ \bibinfo {pages} {1} (\bibinfo {year}
  {2011})}\BibitemShut {NoStop}%
\bibitem [{\citenamefont {Wang}\ \emph {et~al.}(2013)\citenamefont {Wang},
  \citenamefont {Yang}, \citenamefont {Liao}, \citenamefont {Zhang},
  \citenamefont {Shen}, \citenamefont {Hu}, \citenamefont {Wu}, \citenamefont
  {Yang}, \citenamefont {Jiang}, \citenamefont {Tang} \emph
  {et~al.}}]{wang2013direct}%
  \BibitemOpen
  \bibfield  {author} {\bibinfo {author} {\bibfnamefont {J.-Y.}\ \bibnamefont
  {Wang}}, \bibinfo {author} {\bibfnamefont {B.}~\bibnamefont {Yang}}, \bibinfo
  {author} {\bibfnamefont {S.-K.}\ \bibnamefont {Liao}}, \bibinfo {author}
  {\bibfnamefont {L.}~\bibnamefont {Zhang}}, \bibinfo {author} {\bibfnamefont
  {Q.}~\bibnamefont {Shen}}, \bibinfo {author} {\bibfnamefont {X.-F.}\
  \bibnamefont {Hu}}, \bibinfo {author} {\bibfnamefont {J.-C.}\ \bibnamefont
  {Wu}}, \bibinfo {author} {\bibfnamefont {S.-J.}\ \bibnamefont {Yang}},
  \bibinfo {author} {\bibfnamefont {H.}~\bibnamefont {Jiang}}, \bibinfo
  {author} {\bibfnamefont {Y.-L.}\ \bibnamefont {Tang}},  \emph {et~al.},\
  }\href {\doibase 10.1038/nphoton.2013.89} {\bibfield  {journal} {\bibinfo
  {journal} {Nature Photonics}\ }\textbf {\bibinfo {volume} {7}},\ \bibinfo
  {pages} {387} (\bibinfo {year} {2013})}\BibitemShut {NoStop}%
\bibitem [{\citenamefont {Bourgoin}\ \emph {et~al.}(2015)\citenamefont
  {Bourgoin}, \citenamefont {Higgins}, \citenamefont {Gigov}, \citenamefont
  {Holloway}, \citenamefont {Pugh}, \citenamefont {Kaiser}, \citenamefont
  {Cranmer},\ and\ \citenamefont {Jennewein}}]{bourgoin2015free}%
  \BibitemOpen
  \bibfield  {author} {\bibinfo {author} {\bibfnamefont {J.-P.}\ \bibnamefont
  {Bourgoin}}, \bibinfo {author} {\bibfnamefont {B.~L.}\ \bibnamefont
  {Higgins}}, \bibinfo {author} {\bibfnamefont {N.}~\bibnamefont {Gigov}},
  \bibinfo {author} {\bibfnamefont {C.}~\bibnamefont {Holloway}}, \bibinfo
  {author} {\bibfnamefont {C.~J.}\ \bibnamefont {Pugh}}, \bibinfo {author}
  {\bibfnamefont {S.}~\bibnamefont {Kaiser}}, \bibinfo {author} {\bibfnamefont
  {M.}~\bibnamefont {Cranmer}}, \ and\ \bibinfo {author} {\bibfnamefont
  {T.}~\bibnamefont {Jennewein}},\ }\href {\doibase
  https://doi.org/10.1364/OE.23.033437} {\bibfield  {journal} {\bibinfo
  {journal} {Optics express}\ }\textbf {\bibinfo {volume} {23}},\ \bibinfo
  {pages} {33437} (\bibinfo {year} {2015})}\BibitemShut {NoStop}%
\bibitem [{\citenamefont {Pugh}\ \emph {et~al.}(2017)\citenamefont {Pugh},
  \citenamefont {Kaiser}, \citenamefont {Bourgoin}, \citenamefont {Jin},
  \citenamefont {Sultana}, \citenamefont {Agne}, \citenamefont {Anisimova},
  \citenamefont {Makarov}, \citenamefont {Choi}, \citenamefont {Higgins} \emph
  {et~al.}}]{pugh2017airborne}%
  \BibitemOpen
  \bibfield  {author} {\bibinfo {author} {\bibfnamefont {C.~J.}\ \bibnamefont
  {Pugh}}, \bibinfo {author} {\bibfnamefont {S.}~\bibnamefont {Kaiser}},
  \bibinfo {author} {\bibfnamefont {J.-P.}\ \bibnamefont {Bourgoin}}, \bibinfo
  {author} {\bibfnamefont {J.}~\bibnamefont {Jin}}, \bibinfo {author}
  {\bibfnamefont {N.}~\bibnamefont {Sultana}}, \bibinfo {author} {\bibfnamefont
  {S.}~\bibnamefont {Agne}}, \bibinfo {author} {\bibfnamefont {E.}~\bibnamefont
  {Anisimova}}, \bibinfo {author} {\bibfnamefont {V.}~\bibnamefont {Makarov}},
  \bibinfo {author} {\bibfnamefont {E.}~\bibnamefont {Choi}}, \bibinfo {author}
  {\bibfnamefont {B.~L.}\ \bibnamefont {Higgins}},  \emph {et~al.},\ }\href
  {\doibase 10.1088/2058-9565/aa701f} {\bibfield  {journal} {\bibinfo
  {journal} {Quantum Science and Technology}\ }\textbf {\bibinfo {volume}
  {2}},\ \bibinfo {pages} {024009} (\bibinfo {year} {2017})}\BibitemShut
  {NoStop}%
\bibitem [{\citenamefont {Nauerth}\ \emph {et~al.}(2013)\citenamefont
  {Nauerth}, \citenamefont {Moll}, \citenamefont {Rau}, \citenamefont {Fuchs},
  \citenamefont {Horwath}, \citenamefont {Frick},\ and\ \citenamefont
  {Weinfurter}}]{nauerth2013air}%
  \BibitemOpen
  \bibfield  {author} {\bibinfo {author} {\bibfnamefont {S.}~\bibnamefont
  {Nauerth}}, \bibinfo {author} {\bibfnamefont {F.}~\bibnamefont {Moll}},
  \bibinfo {author} {\bibfnamefont {M.}~\bibnamefont {Rau}}, \bibinfo {author}
  {\bibfnamefont {C.}~\bibnamefont {Fuchs}}, \bibinfo {author} {\bibfnamefont
  {J.}~\bibnamefont {Horwath}}, \bibinfo {author} {\bibfnamefont
  {S.}~\bibnamefont {Frick}}, \ and\ \bibinfo {author} {\bibfnamefont
  {H.}~\bibnamefont {Weinfurter}},\ }\href {\doibase
  https://doi.org/10.1038/nphoton.2013.46} {\bibfield  {journal} {\bibinfo
  {journal} {Nature Photonics}\ }\textbf {\bibinfo {volume} {7}},\ \bibinfo
  {pages} {382} (\bibinfo {year} {2013})}\BibitemShut {NoStop}%
\bibitem [{\citenamefont {Liu}\ \emph {et~al.}(2020)\citenamefont {Liu},
  \citenamefont {Tian}, \citenamefont {Gu}, \citenamefont {Fan}, \citenamefont
  {Ni}, \citenamefont {Yang}, \citenamefont {Zhang}, \citenamefont {Hu},
  \citenamefont {Guo}, \citenamefont {Cao} \emph {et~al.}}]{liu2020drone}%
  \BibitemOpen
  \bibfield  {author} {\bibinfo {author} {\bibfnamefont {H.-Y.}\ \bibnamefont
  {Liu}}, \bibinfo {author} {\bibfnamefont {X.-H.}\ \bibnamefont {Tian}},
  \bibinfo {author} {\bibfnamefont {C.}~\bibnamefont {Gu}}, \bibinfo {author}
  {\bibfnamefont {P.}~\bibnamefont {Fan}}, \bibinfo {author} {\bibfnamefont
  {X.}~\bibnamefont {Ni}}, \bibinfo {author} {\bibfnamefont {R.}~\bibnamefont
  {Yang}}, \bibinfo {author} {\bibfnamefont {J.-N.}\ \bibnamefont {Zhang}},
  \bibinfo {author} {\bibfnamefont {M.}~\bibnamefont {Hu}}, \bibinfo {author}
  {\bibfnamefont {J.}~\bibnamefont {Guo}}, \bibinfo {author} {\bibfnamefont
  {X.}~\bibnamefont {Cao}},  \emph {et~al.},\ }\href {\doibase
  10.1093/nsr/nwz227} {\bibfield  {journal} {\bibinfo  {journal} {National
  science review}\ }\textbf {\bibinfo {volume} {7}},\ \bibinfo {pages} {921}
  (\bibinfo {year} {2020})}\BibitemShut {NoStop}%
\bibitem [{\citenamefont {Joshi}\ \emph {et~al.}(2018)\citenamefont {Joshi},
  \citenamefont {Pienaar}, \citenamefont {Ralph}, \citenamefont {Cacciapuoti},
  \citenamefont {McCutcheon}, \citenamefont {Rarity}, \citenamefont
  {Giggenbach}, \citenamefont {Lim}, \citenamefont {Makarov}, \citenamefont
  {Fuentes} \emph {et~al.}}]{joshi2018space}%
  \BibitemOpen
  \bibfield  {author} {\bibinfo {author} {\bibfnamefont {S.~K.}\ \bibnamefont
  {Joshi}}, \bibinfo {author} {\bibfnamefont {J.}~\bibnamefont {Pienaar}},
  \bibinfo {author} {\bibfnamefont {T.~C.}\ \bibnamefont {Ralph}}, \bibinfo
  {author} {\bibfnamefont {L.}~\bibnamefont {Cacciapuoti}}, \bibinfo {author}
  {\bibfnamefont {W.}~\bibnamefont {McCutcheon}}, \bibinfo {author}
  {\bibfnamefont {J.}~\bibnamefont {Rarity}}, \bibinfo {author} {\bibfnamefont
  {D.}~\bibnamefont {Giggenbach}}, \bibinfo {author} {\bibfnamefont {J.~G.}\
  \bibnamefont {Lim}}, \bibinfo {author} {\bibfnamefont {V.}~\bibnamefont
  {Makarov}}, \bibinfo {author} {\bibfnamefont {I.}~\bibnamefont {Fuentes}},
  \emph {et~al.},\ }\href {\doibase https://doi.org/10.1088/1367-2630/aac58b}
  {\bibfield  {journal} {\bibinfo  {journal} {New Journal of Physics}\ }\textbf
  {\bibinfo {volume} {20}},\ \bibinfo {pages} {063016} (\bibinfo {year}
  {2018})}\BibitemShut {NoStop}%
\bibitem [{\citenamefont {VanWiggeren}\ and\ \citenamefont
  {Roy}(1999)}]{vanwiggeren_transmission_1999}%
  \BibitemOpen
  \bibfield  {author} {\bibinfo {author} {\bibfnamefont {G.~D.}\ \bibnamefont
  {VanWiggeren}}\ and\ \bibinfo {author} {\bibfnamefont {R.}~\bibnamefont
  {Roy}},\ }\href {\doibase 10.1364/AO.38.003888} {\bibfield  {journal}
  {\bibinfo  {journal} {Appl. Opt.}\ }\textbf {\bibinfo {volume} {38}},\
  \bibinfo {pages} {3888} (\bibinfo {year} {1999})}\BibitemShut {NoStop}%
\bibitem [{\citenamefont {Gordon}\ and\ \citenamefont
  {Kogelnik}(2000)}]{gordon_pmd_2000}%
  \BibitemOpen
  \bibfield  {author} {\bibinfo {author} {\bibfnamefont {J.~P.}\ \bibnamefont
  {Gordon}}\ and\ \bibinfo {author} {\bibfnamefont {H.}~\bibnamefont
  {Kogelnik}},\ }\href {\doibase 10.1073/pnas.97.9.4541} {\bibfield  {journal}
  {\bibinfo  {journal} {Proc. Natl. Acad. Sci. U.S.A.}\ }\textbf {\bibinfo
  {volume} {97}},\ \bibinfo {pages} {4541} (\bibinfo {year}
  {2000})}\BibitemShut {NoStop}%
\bibitem [{\citenamefont {Zhang}\ \emph {et~al.}(2017)\citenamefont {Zhang},
  \citenamefont {Ding},\ and\ \citenamefont {Dang}}]{zhang_polarization_2017}%
  \BibitemOpen
  \bibfield  {author} {\bibinfo {author} {\bibfnamefont {J.}~\bibnamefont
  {Zhang}}, \bibinfo {author} {\bibfnamefont {S.}~\bibnamefont {Ding}}, \ and\
  \bibinfo {author} {\bibfnamefont {A.}~\bibnamefont {Dang}},\ }\href {\doibase
  10.1364/AO.56.005145} {\bibfield  {journal} {\bibinfo  {journal} {Appl.
  Opt.}\ }\textbf {\bibinfo {volume} {56}},\ \bibinfo {pages} {5145} (\bibinfo
  {year} {2017})}\BibitemShut {NoStop}%
\bibitem [{\citenamefont {Korotkova}\ \emph {et~al.}(2005)\citenamefont
  {Korotkova}, \citenamefont {Salem}, \citenamefont {Dogariu},\ and\
  \citenamefont {Wolf}}]{korotkova_changes_2005}%
  \BibitemOpen
  \bibfield  {author} {\bibinfo {author} {\bibfnamefont {O.}~\bibnamefont
  {Korotkova}}, \bibinfo {author} {\bibfnamefont {M.}~\bibnamefont {Salem}},
  \bibinfo {author} {\bibfnamefont {A.}~\bibnamefont {Dogariu}}, \ and\
  \bibinfo {author} {\bibfnamefont {E.}~\bibnamefont {Wolf}},\ }\href {\doibase
  10.1080/17455030500184511} {\bibfield  {journal} {\bibinfo  {journal} {Waves
  in Random and Complex Media}\ }\textbf {\bibinfo {volume} {15}},\ \bibinfo
  {pages} {353} (\bibinfo {year} {2005})}\BibitemShut {NoStop}%
\bibitem [{\citenamefont {Zhu}\ \emph {et~al.}(2021)\citenamefont {Zhu},
  \citenamefont {Janasik}, \citenamefont {Fyffe}, \citenamefont {Hay},
  \citenamefont {Zhou}, \citenamefont {Kantor}, \citenamefont {Winder},
  \citenamefont {Boyd}, \citenamefont {Leuchs},\ and\ \citenamefont
  {Shi}}]{zhu_compensation-free_2021}%
  \BibitemOpen
  \bibfield  {author} {\bibinfo {author} {\bibfnamefont {Z.}~\bibnamefont
  {Zhu}}, \bibinfo {author} {\bibfnamefont {M.}~\bibnamefont {Janasik}},
  \bibinfo {author} {\bibfnamefont {A.}~\bibnamefont {Fyffe}}, \bibinfo
  {author} {\bibfnamefont {D.}~\bibnamefont {Hay}}, \bibinfo {author}
  {\bibfnamefont {Y.}~\bibnamefont {Zhou}}, \bibinfo {author} {\bibfnamefont
  {B.}~\bibnamefont {Kantor}}, \bibinfo {author} {\bibfnamefont
  {T.}~\bibnamefont {Winder}}, \bibinfo {author} {\bibfnamefont {R.~W.}\
  \bibnamefont {Boyd}}, \bibinfo {author} {\bibfnamefont {G.}~\bibnamefont
  {Leuchs}}, \ and\ \bibinfo {author} {\bibfnamefont {Z.}~\bibnamefont {Shi}},\
  }\href {\doibase 10.1038/s41467-021-21793-1} {\bibfield  {journal} {\bibinfo
  {journal} {Nat. Commun.}\ }\textbf {\bibinfo {volume} {12}},\ \bibinfo
  {pages} {1666} (\bibinfo {year} {2021})}\BibitemShut {NoStop}%
\bibitem [{\citenamefont {Yang}\ \emph {et~al.}(2018)\citenamefont {Yang},
  \citenamefont {Xue}, \citenamefont {Li}, \citenamefont {Shi}, \citenamefont
  {Zhu},\ and\ \citenamefont {Zhu}}]{yang_influence_2018}%
  \BibitemOpen
  \bibfield  {author} {\bibinfo {author} {\bibfnamefont {R.}~\bibnamefont
  {Yang}}, \bibinfo {author} {\bibfnamefont {Y.}~\bibnamefont {Xue}}, \bibinfo
  {author} {\bibfnamefont {Y.}~\bibnamefont {Li}}, \bibinfo {author}
  {\bibfnamefont {L.}~\bibnamefont {Shi}}, \bibinfo {author} {\bibfnamefont
  {Y.}~\bibnamefont {Zhu}}, \ and\ \bibinfo {author} {\bibfnamefont
  {Q.}~\bibnamefont {Zhu}},\ }in\ \href {\doibase 10.1117/12.2312038} {\emph
  {\bibinfo {booktitle} {Young {Scientists} {Forum} 2017}}},\ \bibinfo {editor}
  {edited by\ \bibinfo {editor} {\bibfnamefont {S.}~\bibnamefont {Zhuang}},
  \bibinfo {editor} {\bibfnamefont {J.}~\bibnamefont {Chu}}, \ and\ \bibinfo
  {editor} {\bibfnamefont {J.-W.}\ \bibnamefont {Pan}}}\ (\bibinfo  {publisher}
  {SPIE},\ \bibinfo {address} {Shanghai, China},\ \bibinfo {year} {2018})\
  p.~\bibinfo {pages} {18}\BibitemShut {NoStop}%
\bibitem [{\citenamefont {Lee}\ \emph {et~al.}(2022)\citenamefont {Lee},
  \citenamefont {Mohammadi}, \citenamefont {Babcock}, \citenamefont {Higgins},
  \citenamefont {Podmore},\ and\ \citenamefont
  {Jennewein}}]{lee_robotized_2022}%
  \BibitemOpen
  \bibfield  {author} {\bibinfo {author} {\bibfnamefont {Y.~S.}\ \bibnamefont
  {Lee}}, \bibinfo {author} {\bibfnamefont {K.}~\bibnamefont {Mohammadi}},
  \bibinfo {author} {\bibfnamefont {L.}~\bibnamefont {Babcock}}, \bibinfo
  {author} {\bibfnamefont {B.~L.}\ \bibnamefont {Higgins}}, \bibinfo {author}
  {\bibfnamefont {H.}~\bibnamefont {Podmore}}, \ and\ \bibinfo {author}
  {\bibfnamefont {T.}~\bibnamefont {Jennewein}},\ }\href {\doibase
  10.1063/5.0070176} {\bibfield  {journal} {\bibinfo  {journal} {Review of
  Scientific Instruments}\ }\textbf {\bibinfo {volume} {93}},\ \bibinfo {pages}
  {033101} (\bibinfo {year} {2022})}\BibitemShut {NoStop}%
\bibitem [{\citenamefont {Ding}\ \emph {et~al.}(2017)\citenamefont {Ding},
  \citenamefont {Chen}, \citenamefont {Chen}, \citenamefont {Wang},
  \citenamefont {li}, \citenamefont {Wang}, \citenamefont {Yin}, \citenamefont
  {Guo},\ and\ \citenamefont {Han}}]{ding_polarization-basis_2017}%
  \BibitemOpen
  \bibfield  {author} {\bibinfo {author} {\bibfnamefont {Y.-Y.}\ \bibnamefont
  {Ding}}, \bibinfo {author} {\bibfnamefont {W.}~\bibnamefont {Chen}}, \bibinfo
  {author} {\bibfnamefont {H.}~\bibnamefont {Chen}}, \bibinfo {author}
  {\bibfnamefont {C.}~\bibnamefont {Wang}}, \bibinfo {author} {\bibfnamefont
  {Y.-P.}\ \bibnamefont {li}}, \bibinfo {author} {\bibfnamefont
  {S.}~\bibnamefont {Wang}}, \bibinfo {author} {\bibfnamefont {Z.-Q.}\
  \bibnamefont {Yin}}, \bibinfo {author} {\bibfnamefont {G.-C.}\ \bibnamefont
  {Guo}}, \ and\ \bibinfo {author} {\bibfnamefont {Z.-F.}\ \bibnamefont
  {Han}},\ }\href {\doibase 10.1364/OL.42.001023} {\bibfield  {journal}
  {\bibinfo  {journal} {Opt. Lett.}\ }\textbf {\bibinfo {volume} {42}},\
  \bibinfo {pages} {1023} (\bibinfo {year} {2017})}\BibitemShut {NoStop}%
\bibitem [{\citenamefont {Xavier}\ \emph {et~al.}(2009)\citenamefont {Xavier},
  \citenamefont {Walenta}, \citenamefont {de~Faria}, \citenamefont
  {Tempor{\~{a} }o}, \citenamefont {Gisin}, \citenamefont {Zbinden},\ and\
  \citenamefont {von~der Weid}}]{Xavier_2009}%
  \BibitemOpen
  \bibfield  {author} {\bibinfo {author} {\bibfnamefont {G.~B.}\ \bibnamefont
  {Xavier}}, \bibinfo {author} {\bibfnamefont {N.}~\bibnamefont {Walenta}},
  \bibinfo {author} {\bibfnamefont {G.~V.}\ \bibnamefont {de~Faria}}, \bibinfo
  {author} {\bibfnamefont {G.~P.}\ \bibnamefont {Tempor{\~{a} }o}}, \bibinfo
  {author} {\bibfnamefont {N.}~\bibnamefont {Gisin}}, \bibinfo {author}
  {\bibfnamefont {H.}~\bibnamefont {Zbinden}}, \ and\ \bibinfo {author}
  {\bibfnamefont {J.~P.}\ \bibnamefont {von~der Weid}},\ }\href {\doibase
  10.1088/1367-2630/11/4/045015} {\bibfield  {journal} {\bibinfo  {journal}
  {New Journal of Physics}\ }\textbf {\bibinfo {volume} {11}},\ \bibinfo
  {pages} {045015} (\bibinfo {year} {2009})}\BibitemShut {NoStop}%
\bibitem [{\citenamefont {Li}\ \emph {et~al.}(2018)\citenamefont {Li},
  \citenamefont {Gao}, \citenamefont {Li}, \citenamefont {Xue}, \citenamefont
  {Wang}, \citenamefont {Lu}, \citenamefont {Xiang}, \citenamefont {Zhao},
  \citenamefont {Yan}, \citenamefont {Chen}, \citenamefont {Yu},\ and\
  \citenamefont {Liu}}]{li_field_2018}%
  \BibitemOpen
  \bibfield  {author} {\bibinfo {author} {\bibfnamefont {D.-D.}\ \bibnamefont
  {Li}}, \bibinfo {author} {\bibfnamefont {S.}~\bibnamefont {Gao}}, \bibinfo
  {author} {\bibfnamefont {G.-C.}\ \bibnamefont {Li}}, \bibinfo {author}
  {\bibfnamefont {L.}~\bibnamefont {Xue}}, \bibinfo {author} {\bibfnamefont
  {L.-W.}\ \bibnamefont {Wang}}, \bibinfo {author} {\bibfnamefont {C.-B.}\
  \bibnamefont {Lu}}, \bibinfo {author} {\bibfnamefont {Y.}~\bibnamefont
  {Xiang}}, \bibinfo {author} {\bibfnamefont {Z.-Y.}\ \bibnamefont {Zhao}},
  \bibinfo {author} {\bibfnamefont {L.-C.}\ \bibnamefont {Yan}}, \bibinfo
  {author} {\bibfnamefont {Z.-Y.}\ \bibnamefont {Chen}}, \bibinfo {author}
  {\bibfnamefont {G.}~\bibnamefont {Yu}}, \ and\ \bibinfo {author}
  {\bibfnamefont {J.-H.}\ \bibnamefont {Liu}},\ }\href {\doibase
  10.1364/OE.26.022793} {\bibfield  {journal} {\bibinfo  {journal} {Opt.
  Express}\ }\textbf {\bibinfo {volume} {26}},\ \bibinfo {pages} {22793}
  (\bibinfo {year} {2018})}\BibitemShut {NoStop}%
\bibitem [{\citenamefont {Granade}\ \emph {et~al.}(2016)\citenamefont
  {Granade}, \citenamefont {Combes},\ and\ \citenamefont
  {Cory}}]{Granade_2016}%
  \BibitemOpen
  \bibfield  {author} {\bibinfo {author} {\bibfnamefont {C.}~\bibnamefont
  {Granade}}, \bibinfo {author} {\bibfnamefont {J.}~\bibnamefont {Combes}}, \
  and\ \bibinfo {author} {\bibfnamefont {D.~G.}\ \bibnamefont {Cory}},\ }\href
  {\doibase 10.1088/1367-2630/18/3/033024} {\bibfield  {journal} {\bibinfo
  {journal} {New Journal of Physics}\ }\textbf {\bibinfo {volume} {18}},\
  \bibinfo {pages} {033024} (\bibinfo {year} {2016})}\BibitemShut {NoStop}%
\bibitem [{\citenamefont {Evans}\ \emph {et~al.}(2022)\citenamefont {Evans},
  \citenamefont {Huang}, \citenamefont {Yoneda}, \citenamefont {Harper},
  \citenamefont {Tanttu}, \citenamefont {Chan}, \citenamefont {Hudson},
  \citenamefont {Itoh}, \citenamefont {Saraiva}, \citenamefont {Yang},
  \citenamefont {Dzurak},\ and\ \citenamefont {Bartlett}}]{Evans_2022}%
  \BibitemOpen
  \bibfield  {author} {\bibinfo {author} {\bibfnamefont {T.}~\bibnamefont
  {Evans}}, \bibinfo {author} {\bibfnamefont {W.}~\bibnamefont {Huang}},
  \bibinfo {author} {\bibfnamefont {J.}~\bibnamefont {Yoneda}}, \bibinfo
  {author} {\bibfnamefont {R.}~\bibnamefont {Harper}}, \bibinfo {author}
  {\bibfnamefont {T.}~\bibnamefont {Tanttu}}, \bibinfo {author} {\bibfnamefont
  {K.}~\bibnamefont {Chan}}, \bibinfo {author} {\bibfnamefont {F.}~\bibnamefont
  {Hudson}}, \bibinfo {author} {\bibfnamefont {K.}~\bibnamefont {Itoh}},
  \bibinfo {author} {\bibfnamefont {A.}~\bibnamefont {Saraiva}}, \bibinfo
  {author} {\bibfnamefont {C.}~\bibnamefont {Yang}}, \bibinfo {author}
  {\bibfnamefont {A.}~\bibnamefont {Dzurak}}, \ and\ \bibinfo {author}
  {\bibfnamefont {S.}~\bibnamefont {Bartlett}},\ }\href {\doibase
  10.1103/physrevapplied.17.024068} {\bibfield  {journal} {\bibinfo  {journal}
  {Phys. Rev. App.}\ }\textbf {\bibinfo {volume} {17}},\ \bibinfo {pages}
  {024068} (\bibinfo {year} {2022})}\BibitemShut {NoStop}%
\bibitem [{\citenamefont {Ferrie}(2014)}]{SGT_theory}%
  \BibitemOpen
  \bibfield  {author} {\bibinfo {author} {\bibfnamefont {C.}~\bibnamefont
  {Ferrie}},\ }\href {\doibase 10.1103/PhysRevLett.113.190404} {\bibfield
  {journal} {\bibinfo  {journal} {Phys. Rev. Lett.}\ }\textbf {\bibinfo
  {volume} {113}},\ \bibinfo {pages} {190404} (\bibinfo {year}
  {2014})}\BibitemShut {NoStop}%
\bibitem [{\citenamefont {Rambach}\ \emph {et~al.}(2021)\citenamefont
  {Rambach}, \citenamefont {Qaryan}, \citenamefont {Kewming}, \citenamefont
  {Ferrie}, \citenamefont {White},\ and\ \citenamefont
  {Romero}}]{Rambach_2021}%
  \BibitemOpen
  \bibfield  {author} {\bibinfo {author} {\bibfnamefont {M.}~\bibnamefont
  {Rambach}}, \bibinfo {author} {\bibfnamefont {M.}~\bibnamefont {Qaryan}},
  \bibinfo {author} {\bibfnamefont {M.}~\bibnamefont {Kewming}}, \bibinfo
  {author} {\bibfnamefont {C.}~\bibnamefont {Ferrie}}, \bibinfo {author}
  {\bibfnamefont {A.~G.}\ \bibnamefont {White}}, \ and\ \bibinfo {author}
  {\bibfnamefont {J.}~\bibnamefont {Romero}},\ }\href {\doibase
  10.1103/physrevlett.126.100402} {\bibfield  {journal} {\bibinfo  {journal}
  {Phys. Rev. Lett.}\ }\textbf {\bibinfo {volume} {126}} (\bibinfo {year}
  {2021}),\ 10.1103/physrevlett.126.100402}\BibitemShut {NoStop}%
\bibitem [{\citenamefont {Palmieri}\ \emph {et~al.}(2020)\citenamefont
  {Palmieri}, \citenamefont {Kovlakov}, \citenamefont {Bianchi}, \citenamefont
  {Yudin}, \citenamefont {Straupe}, \citenamefont {Biamonte},\ and\
  \citenamefont {Kulik}}]{palmieri_experimental_2020}%
  \BibitemOpen
  \bibfield  {author} {\bibinfo {author} {\bibfnamefont {A.~M.}\ \bibnamefont
  {Palmieri}}, \bibinfo {author} {\bibfnamefont {E.}~\bibnamefont {Kovlakov}},
  \bibinfo {author} {\bibfnamefont {F.}~\bibnamefont {Bianchi}}, \bibinfo
  {author} {\bibfnamefont {D.}~\bibnamefont {Yudin}}, \bibinfo {author}
  {\bibfnamefont {S.}~\bibnamefont {Straupe}}, \bibinfo {author} {\bibfnamefont
  {J.~D.}\ \bibnamefont {Biamonte}}, \ and\ \bibinfo {author} {\bibfnamefont
  {S.}~\bibnamefont {Kulik}},\ }\href {\doibase 10.1038/s41534-020-0248-6}
  {\bibfield  {journal} {\bibinfo  {journal} {npj Quantum Inf}\ }\textbf
  {\bibinfo {volume} {6}},\ \bibinfo {pages} {20} (\bibinfo {year}
  {2020})}\BibitemShut {NoStop}%
\bibitem [{opt()}]{optimization}%
  \BibitemOpen
  \href@noop {} {}\bibinfo {note} {The details of the optimization algorithm
  will be presented elsewhere.}\BibitemShut {Stop}%
\bibitem [{\citenamefont {Renner}\ \emph {et~al.}(2005)\citenamefont {Renner},
  \citenamefont {Gisin},\ and\ \citenamefont {Kraus}}]{renner2005}%
  \BibitemOpen
  \bibfield  {author} {\bibinfo {author} {\bibfnamefont {R.}~\bibnamefont
  {Renner}}, \bibinfo {author} {\bibfnamefont {N.}~\bibnamefont {Gisin}}, \
  and\ \bibinfo {author} {\bibfnamefont {B.}~\bibnamefont {Kraus}},\ }\href
  {\doibase 10.1103/PhysRevA.72.012332} {\bibfield  {journal} {\bibinfo
  {journal} {Phys. Rev. A}\ }\textbf {\bibinfo {volume} {72}},\ \bibinfo
  {pages} {012332} (\bibinfo {year} {2005})}\BibitemShut {NoStop}%
\bibitem [{\citenamefont {Shor}\ and\ \citenamefont
  {Preskill}(2000)}]{Preskil-11-percent}%
  \BibitemOpen
  \bibfield  {author} {\bibinfo {author} {\bibfnamefont {P.~W.}\ \bibnamefont
  {Shor}}\ and\ \bibinfo {author} {\bibfnamefont {J.}~\bibnamefont
  {Preskill}},\ }\href {\doibase 10.1103/PhysRevLett.85.441} {\bibfield
  {journal} {\bibinfo  {journal} {Phys. Rev. Lett.}\ }\textbf {\bibinfo
  {volume} {85}},\ \bibinfo {pages} {441} (\bibinfo {year} {2000})}\BibitemShut
  {NoStop}%
\bibitem [{\citenamefont {Fedrizzi}\ \emph {et~al.}(2007)\citenamefont
  {Fedrizzi}, \citenamefont {Herbst}, \citenamefont {Poppe}, \citenamefont
  {Jennewein},\ and\ \citenamefont
  {Zeilinger}}]{fedrizzi_wavelength-tunable_2007}%
  \BibitemOpen
  \bibfield  {author} {\bibinfo {author} {\bibfnamefont {A.}~\bibnamefont
  {Fedrizzi}}, \bibinfo {author} {\bibfnamefont {T.}~\bibnamefont {Herbst}},
  \bibinfo {author} {\bibfnamefont {A.}~\bibnamefont {Poppe}}, \bibinfo
  {author} {\bibfnamefont {T.}~\bibnamefont {Jennewein}}, \ and\ \bibinfo
  {author} {\bibfnamefont {A.}~\bibnamefont {Zeilinger}},\ }\href {\doibase
  10.1364/OE.15.015377} {\bibfield  {journal} {\bibinfo  {journal} {Opt.
  Express}\ }\textbf {\bibinfo {volume} {15}},\ \bibinfo {pages} {15377}
  (\bibinfo {year} {2007})}\BibitemShut {NoStop}%
\bibitem [{\citenamefont {Coecke}(2010)}]{Coecke_2010}%
  \BibitemOpen
  \bibfield  {author} {\bibinfo {author} {\bibfnamefont {B.}~\bibnamefont
  {Coecke}},\ }\href {\doibase 10.1080/00107510903257624} {\bibfield  {journal}
  {\bibinfo  {journal} {Contemporary Physics}\ }\textbf {\bibinfo {volume}
  {51}},\ \bibinfo {pages} {59} (\bibinfo {year} {2010})}\BibitemShut {NoStop}%
\bibitem [{\citenamefont {Lu}\ \emph {et~al.}(2022)\citenamefont {Lu},
  \citenamefont {Cao}, \citenamefont {Peng},\ and\ \citenamefont
  {Pan}}]{Micius_rev_mod}%
  \BibitemOpen
  \bibfield  {author} {\bibinfo {author} {\bibfnamefont {C.-Y.}\ \bibnamefont
  {Lu}}, \bibinfo {author} {\bibfnamefont {Y.}~\bibnamefont {Cao}}, \bibinfo
  {author} {\bibfnamefont {C.-Z.}\ \bibnamefont {Peng}}, \ and\ \bibinfo
  {author} {\bibfnamefont {J.-W.}\ \bibnamefont {Pan}},\ }\href {\doibase
  10.1103/RevModPhys.94.035001} {\bibfield  {journal} {\bibinfo  {journal}
  {Rev. Mod. Phys.}\ }\textbf {\bibinfo {volume} {94}},\ \bibinfo {pages}
  {035001} (\bibinfo {year} {2022})}\BibitemShut {NoStop}%
\end{thebibliography}%

\end{document}